\documentclass[10pt]{article}

\usepackage{graphicx} 
\usepackage[english]{babel}
\usepackage{amsmath}
\usepackage{amssymb}
\usepackage{amsthm}
\usepackage{subfig}
\usepackage{xcolor}
\usepackage{comment}
\usepackage{natbib}
\usepackage{hyperref}

\newtheorem{theorem}{Theorem}

\def\commenton{1}
\newcommand{\houssam}[1]{\if\commenton1{\color{green}(HN: #1)}\fi}
\newcommand{\abbas}[1]{\if\commenton1{\color{magenta}(AZ: #1)}\fi}
\newcommand{\richard}[1]{\if\commenton1{\color{teal}(RM: #1)}\fi}

\title{Breaking the Winner's Curse with Bayesian Hybrid Shrinkage}
\author{Richard Mudd, Abbas Zaidi, Rina Friedberg,\\Ilya Gorbachev, Anchal Choubey, and Houssam Nassif\\[6pt] \normalsize Meta Platforms}

\begin{document}

\maketitle

\begin{abstract}
    \noindent The widespread adoption of randomized controlled trials (A/B Tests) for decision-making has introduced a pervasive "Winner's Curse": experiments selected for launch often exhibit upwardly biased effect estimates and invalid confidence intervals. This selection bias leads to over-optimistic impact projections and undermines decision-making, particularly in low-power regimes. We propose Bayesian Hybrid Shrinkage (BHS), an empirical Bayes (EB) framework that leverages data-driven priors to mitigate selection bias and provides accurate uncertainty quantification. Unlike traditional EB methods that apply uniform shrinkage, BHS introduces an experiment-specific "local" shrinkage factor that incorporates individual experiment characteristics, improving robustness against prior misspecification. We also derive a closed-form inference strategy designed for high-throughput production environments. Extensive simulations and real-world evaluations at Meta Platforms demonstrate that BHS outperforms existing methods in terms of bias reduction and interval coverage, even under substantial violations of modeling assumptions.

\end{abstract}

\section{Introduction}

Online Randomized Controlled Experiments (A/B Tests) have become a cornerstone of decision-making in technology companies, commonly used to support deployment decisions and to evaluate their impact to the business~\cite{barajas2021online}. Large-scale platforms run thousands of concurrent experiments, randomly assigning units (e.g., users, devices) to treatments to establish causal links, such as measuring how recommender improvements affect engagement or revenue~\cite{Fiez2024Anduril}.

These experiments use a range of statistical estimation and inference techniques across multiple applications, including:
\begin{itemize}
\item{{\bf Launch Decisions}: Experimenters assess feature, product, or infrastructure changes to inform launch decisions.}
\item{{\bf Individual Effect Estimates:} Experimenters estimate the impact of proposed changes on relevant business metrics.}
\item{{\bf Cumulative Effect Estimates:} Experimenters estimate the cumulative effect of product changes (e.g., by business unit per year) to prioritize investments and inform resource allocation.}
\end{itemize}

Despite differences in experimental designs and processes across business units, decision quality ultimately depends on high-quality inference across A/B Tests~\cite{kohavi2020trustworthy}. A challenge arises when the same experiments are used both for selection (launch decisions) and for estimation. Conditioning inference on passing a selection threshold (i.e., selecting experiments based on noisy estimated effects) can cause ideas that are ineffective or only marginally effective to be declared ``winners'', inducing upward-biased effect estimates~\cite{xu2011bayesian} and miscalibrated confidence intervals among the selected experiments~\cite{andrews2024inference, xu2013MABbias}. This phenomenon is known as the Winner’s Curse.

The size of this selection bias typically scales directly with the magnitude of sampling variability and the decision threshold, and inversely with the true effect~\cite{van2021significance}.
While large-scale replication studies can overcome the Winner's Curse, they're prohibitively expensive. Scalable estimation and inference approaches under selection have thus gained significant industry attention~\cite{lee2018winner, ejdemyr2024estimating, kessler2024overcoming}.

This work presents a \textit{Bayesian Hybrid Shrinkage} (BHS) method that uses empirically motivated priors to resolve the Winner's Curse, providing unbiased estimates and credible intervals with good frequency properties. To address prior sensitivity concerns, we incorporate experiment-specific ``local shrinkage'' factors and evaluate our approach across simulated scenarios that reflect real-world assumption violations. For scalable deployment, we show how to infer the posterior without numerical integration and propose ongoing monitoring as a system health check. A public repository containing the implementation and code to replicate our simulation studies is available  \href{https://github.com/facebookresearch/bites/tree/main}{here}

We make three contributions to the research on Winner's Curse in online experimentation. First, BHS introduces experiment-specific local shrinkage factors that balance global learning with local adaptation. Second, we formalize the conditions under which Bayesian methods achieve well-calibrated inference under selection. Third, we provide a scalable approach for production experimentation platforms with demonstrated robustness to model misspecification.

The remainder of the paper is organized as follows. Section 2 reviews related work. Section 3 presents the problem and the BHS approach. Section 4 evaluates performance in simulations with varying assumption violations. Section 5 reports results on experimental data from Meta Platforms across two different business units, with a brief overview of deployment strategies. Section 6 concludes with future research directions.

\section{Related Work}
Researchers across disciplines—including genetics~\cite{zollner2007overcoming, ferguson2013empirical}, economics~\cite{lind1991winner, thaler1992winner, andrews2024inference}, and statistics~\cite{woody2022optimal, rasines2022empirical}—have increasingly recognized that selection under the Winner's Curse in experimentation can lead to unreliable conclusions, contributing to the \textit{Reproducibility and Replicability Crisis} and highlighting the need for more robust analytical methods~\cite{gupta2019top}.

In the Frequentist literature, researchers have addressed this problem with both theoretical~\cite{van2021significance} and practical~\cite{gelman2014beyond} approaches. Classical solutions, like those from \citet{benjamini2020selective}, focus on valid confidence intervals and p-values under selection. Methods such as \citet{andrews2024inference} use data splitting to achieve unbiased estimates and valid intervals, but these work best with large sample sizes and may not suit smaller studies.

More recently, \citet{lee2018winner} introduced a Frequentist estimator for online experiments that measures the average treatment effect aggregated across multiple experiments, using plug-in estimators and bootstrap methods to ensure unbiasedness and well-calibrated confidence intervals. While effective for aggregated effects (under their long-term properties), our approach extends these benefits to individual experiments.

In contrast to the above, a common Bayesian view on the Winner’s Curse~\cite{xu2011bayesian, rasines2022bayesian}, first argued by \citet{dawid1994selection}, is that no special adjustment for selection is needed since the posterior already reflects the observed data. However, later research \cite{yekutieli2012adjusted, woody2022optimal, mandel2007statistical} shows this only holds under certain assumptions. In practice, selection can interact with parameters in complex ways, like selective reporting or adaptive experiments. Whether adjustment is needed depends on the specific scenario, which we clarify for online experiments in section~\ref{section:theory}.

Within this literature, \textit{Empirical Bayes} (EB) shrinkage methods have emerged as a powerful framework for large-scale inference problems since Efron's seminal work~\cite{efron2012large}. These methods estimate the prior distribution from the data itself, enabling principled shrinkage of individual estimates toward a central value~\cite{nabi2022EB}. In the context of the Winner's Curse, EB approaches offer a particular promise: by borrowing strength across multiple experiments, they can mitigate the upward bias introduced by selection while maintaining computational tractability.

EB methods are focused primarily along two relevant dimensions. \textit{Global} shrinkage approaches~\cite{stephens2017false} assume exchangeability (i.e., apriori experiment effects have a joint-distribution that is invariant to their ordering) and apply shrinkage based on a singular estimated prior. This is computationally efficient but potentially overly aggressive when heterogeneity exists across experiments, a common scenario in online experimentation where product areas, user segments, or designs may exhibit different effect size distributions. \textit{Hierarchical} shrinkage methods~\cite{woody2022optimal} introduce additional structure to accommodate such heterogeneity and provide robustness to model misspecification, but typically require substantial computational overhead that is challenging to deploy at the scale of modern experimentation platforms processing thousands of experiments continuously.

Industry applications employ EB approaches in online experimentation, targeting aggregated effects across experiment collections. \citet{kessler2024overcoming} proposes a global shrinkage approach where experiments are considered exchangeable under a conjugate prior with shared hyperparameters. This is suitable for large-scale deployment, but potentially vulnerable to heterogeneity (albeit robustness under some forms of misspecification is studied in simulation). \citet{ejdemyr2024estimating} employ a hierarchical model over nested experiment collections via MCMC \cite{geyer1992practical}, offering greater accuracy and model flexibility at the cost of computational complexity. Our approach is closest to theirs, but with novel deployment-focused elements.

Our key contribution lies in bridging global and hierarchical EB approaches through experiment-specific local shrinkage factors. Our method estimates a global shrinkage parameter from historical data, then modulates it at the experiment level through local factors that adapt to observed effect magnitudes, providing robustness to prior misspecification. This \textit{hybrid} approach admits closed-form posterior inference without MCMC sampling, enabling scalable deployment while maintaining the flexibility of hierarchical methods.

This hybrid strategy addresses a critical gap in the EB literature: the tension between computational tractability (global approaches) and robustness to misspecification (hierarchical approaches).
We demonstrate the superiority of our approach through large-scale empirical evaluations and simulation studies that examine hyper-parameter misspecification, hidden selection, and heavy-tailed distributions.

\section{Formulation} \label{section:formulation}
Let $i=1,\ldots, N$ denote the total collection of experiments, with $j = 1, \ldots, m_i$ experimental units within each. 
An experimental unit refers to the unit of randomization, and is assigned (randomly) to a treatment condition $Z_{j} \in \{1, 0\}$. The outcome of interest is some business metric $Y_{j}$ for that unit in that experiment.

Within the potential outcomes framework~\cite{rubin1998more},
the target of inference (the estimand) is defined for each experiment as $\theta_{i} = \frac{E[Y(1)]}{E[Y(0)]}$, the ratio of \textit{true} averages between treated and control units in experiment $i$. Relative changes such as these are often relied on to help communicate results in business applications~\cite{Radwan2024Eval}. Then the \textit{Face Value} (FV) Frequentist estimator over the $m_{i}$ units is denoted $\hat{\theta}_{i}^{FV}$ and characterized as
\begin{align} \label{eq:facevalue}
\hat{\theta_{i}}^{FV} = \frac{\frac{1}{\sum_{j=1}^{m_i}Z_{j}}\sum_{j=1}^{m_i}Y_{j}Z_{j}}{\frac{1}{\sum_{i=1}^{m_i}(1 - Z_{j})}{\sum_{j=1}^{m_i}Y_{j}(1 - Z_{j})}},
\end{align}
 with standard error $\hat{\sigma_{i}}$. In aggregated settings like meta-experiments, our target of inference is typically $\Theta = \sum_{i=1}^{N}\theta_i$ that is estimated with $\hat{\Theta}^{FV} = \sum_{i=1}^{N}\hat{\theta_{i}}^{FV}$.

\subsection{Assumptions} \label{section:assumptions}
This section outlines our formulation's assumptions and their practicality in applied settings, particularly for aggregated inference targets. In our experimental setting, \textit{strong ignorability}~\cite{ding2018causal} is assumed and is guaranteed to hold under randomized treatment assignment.

When aggregating experiments, we commonly assume \textit{additivity} of treatment effects. This assumption is widely adopted since experiments run independently and effects are typically small on a relative basis. While effects are multiplicative, additivity provides a reasonable approximation for small magnitudes, a standard practice in the online experimentation literature~\cite{lee2018winner}.

We assume launches occur only with meaningfully \textit{positive} results, typical for organizations and business units optimizing a single metric (e.g., revenue). This assumption explains why the direction of bias is positive: while $\hat{\theta_{i}}^{FV}$ is unbiased for $\theta_{i}$, this does not hold when conditioned on $\left|\hat{\theta_{i}}^{FV}\right| > c$, where $c$ is some positive constant. The size of the bias increases in $c$ and the sampling variance of the estimator, $\mathbb{V}\left[\hat{\theta_{i}}^{FV}\right]$, but decreases in $|\theta_{i}|$~\cite{van2021significance}. While fewer, better-designed experiments could help, the constraint this introduces on platform throughput is undesirable given experimentation's critical role in innovation. \citet{andrews2024inference} face similar challenges in their reliance on 
data-splitting to mitigate selection, as this approach requires sufficiently large experiments 
to maintain statistical power after splitting, thereby limiting throughput as well.

A more practical solution is to discount the observed estimate to counterbalance the overestimation bias and reflect the true effect more closely. The Bayesian paradigm provides a principled framework for this adjustment.

\subsection{Bayesian Hybrid Shrinkage for Inference Under Selection} \label{section:model}

Based on the characterization in section~\ref{section:assumptions}, we propose the following BHS Model which can be formulated as a post-hoc adjustment:

\begin{align}
    \hat{\theta_{i}}|\theta_{i}, \hat{\sigma}^{2}_{i} &\sim \mathrm{N}(\theta_{i}, \hat{\sigma}^{2}_{i}),\\
    \theta_{i} | m_{0}, \lambda_{i}, \tau &\sim \mathrm{N}(m_{0}, \lambda_{i}\cdot\tau),\\
    \lambda_{i}|a, b &\sim \mathrm{Inverse-Gamma}\left(\frac{a}{2}, \frac{b}{2}\right).
\end{align}

The model contains parameters for flexibility across settings: $\tau$ is a global shrinkage parameter controlling information sharing across experiments (e.g., within product families), while $\lambda_i$ is a \textit{local} shrinkage parameter modulating experiment-level behavior. Parameters $a$ and $b$ are selected to reflect different prior properties for $\theta_{i}$ (e.g., heavy-tail behavior). Together, these parameters enable corrections for the Winner's Curse in diverse settings with strategy for inference over individual experiments as well as aggregated meta-experiments in section~\ref{section:inference}.

\subsection{Inference Strategy for Bayesian Hybrid Shrinkage}\label{section:inference}
Inference is either analytical or via posterior simulation using iterative sampling between the target and nuisance parameters\footnote{Nuisance parameters are parameters like $\lambda_i$ that are not targets of inference, but are required for its inference.}. The posterior distribution over $\theta_{i}$ is given as:

\begin{align}
\theta_{i}|\hat{\theta_{i}}, \lambda_{i}, \tau \sim \mathrm{N}\left(\frac{\hat{\sigma}^{2}_{i}}{\hat{\sigma}^{2}_{i} + \lambda_{i}\tau}m_{0} + \frac{{\lambda_{i}\tau}}{\hat{\sigma}^{2}_{i} + \lambda_{i}\tau}\hat{\theta_{i}}, \left(\frac{1}{\hat{\sigma}^{2}_{i}} + \frac{1}{\lambda_{i}\tau}\right)^{-1}\right). \label{eq:hs}
\end{align}

A special case of this posterior with $\lambda_{i} = 1$ corresponds to a Bayesian estimator that imposes global shrinkage as considered by~\citet{kessler2024overcoming}. We evaluate its performance later in this paper.

For aggregated results under additivity, inference is given by summing over the posteriors of the individual experiments:

\begin{align}
    \Theta | \hat{\Theta} \sim \mathrm{N}\left(\sum_{i=1}^{N}\left(\frac{\hat{\sigma}^{2}_{i}}{\hat{\sigma}^{2}_{i} + \lambda_{i}^{*}\tau}m_{0} + \frac{{\lambda^{*}_{i}\tau}}{\hat{\sigma}^{2}_{i} + \lambda_{i}^{*}\tau}\hat{\theta}_{i}\right), \sum_{i=1}^{N}\left(\frac{1}{\hat{\sigma}^{2}_{i}} + \frac{1}{\lambda_{i}^{*}\tau}\right)^{-1}\right),
\end{align}
which converges in distribution to the estimand as follows.

\begin{theorem}
Let $\Theta = \sum_{i=1}^{N} \theta_i$ denote the true aggregated treatment effect, and let $\pi(\Theta \mid \boldsymbol{\hat{\theta}}_m)$ denote the posterior distribution of this aggregate effect given the vector of observed estimates $\boldsymbol{\hat{\theta}}_m = (\hat{\theta}_{1m}, \dots, \hat{\theta}_{Nm})$ from experiments with sample size $m$. As $m \to \infty$, the posterior distribution $\pi(\Theta \mid \boldsymbol{\hat{\theta}}_m)$ converges in distribution to the true value $\Theta$.
\end{theorem}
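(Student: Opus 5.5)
The plan is to show that the posterior on $\Theta$ concentrates at the true $\Theta$, i.e.\ its posterior mean tends to $\Theta$ and its posterior variance tends to $0$, and then conclude by Chebyshev's inequality. The posterior is normal, with the mean and variance displayed above, so the whole argument reduces to tracking how these two quantities behave as $m\to\infty$. The driving fact is the one underlying the model, namely that $\hat{\sigma}^2_{im} = O_p(1/m)$ for each fixed experiment $i$. For the ratio estimator \eqref{eq:facevalue}, this follows from the delta method applied to the two sample means, assuming finite second moments of $Y$ and a nonzero control mean. The same argument gives $\hat{\theta}_{im}\to\theta_i$ in probability, by the law of large numbers and the continuous mapping theorem.

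\textbf{Step 1 (variance).} I would bound each summand by $\left(\hat{\sigma}^{-2}_{im} + (\lambda^*_i\tau)^{-1}\right)^{-1} \le \hat{\sigma}^2_{im}$. Summing over the fixed number $N$ of experiments gives a posterior variance of at most $\sum_i \hat{\sigma}^2_{im} \to 0$ in probability. This step needs no condition on $\lambda^*_i$ beyond $\lambda^*_i\tau>0$.

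\textbf{Step 2 (mean).} I would write the $i$-th component of the posterior mean as $w_{im} m_0 + (1-w_{im})\hat{\theta}_{im}$, with weight $w_{im} = \hat{\sigma}^2_{im}/(\hat{\sigma}^2_{im}+\lambda^*_i\tau)$. If $\lambda^*_i\tau$ stays bounded away from $0$ in probability, then $w_{im}\to 0$, and each component converges to $\theta_i$. Summing over $i$ then gives convergence of the mean to $\Theta$.

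\textbf{Step 3 (conclusion).} For any $\epsilon>0$, Chebyshev gives
\[
\pi\left(|\Theta' - \Theta|>\epsilon \mid \boldsymbol{\hat{\theta}}_m\right) \le \frac{\mathrm{Var}_m + (\mathrm{Mean}_m - \Theta)^2}{\epsilon^2} \to 0,
\]
where $\Theta'$ denotes a draw from the posterior. This is exactly convergence in distribution of the posterior to the point mass at $\Theta$, holding in probability over the data.

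\textbf{Main obstacle.} The delicate point is the behaviour of the local scales $\lambda^*_i$. These may be plug-in or posterior-mode values, or $\lambda_i$ may be integrated out under its Inverse-Gamma prior, and they must not collapse to $0$ at a rate comparable to $\hat{\sigma}^2_{im}$, since otherwise the shrinkage toward $m_0$ would persist. I would try to handle this by noting that the conditional posterior of $\lambda_i$ given $\theta_i$ is $\mathrm{Inverse\text{-}Gamma}\left(\frac{a+1}{2}, \frac{b + (\theta_i-m_0)^2/\tau}{2}\right)$, whose mode is bounded below by $b/(a+3) > 0$ when $b>0$. If $\lambda_i$ is instead marginalized, the prior on $\theta_i$ becomes a scaled Student-$t$ with a positive, continuous density at $\theta_i$. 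In that case I would invoke a standard Bernstein--von Mises / prior-washout argument: the Gaussian likelihood, with variance $O(1/m)$, dominates any prior that is positive and continuous at $\theta_i$. Either way $w_{im}\to 0$, and Steps 1--3 go through.
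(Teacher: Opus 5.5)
Your proof is correct, but it takes a different route from the paper's. The paper argues abstractly. It invokes Bernstein--von Mises for each experiment to conclude that $\pi(\theta_i \mid \hat{\theta}_{im})$ becomes a normal concentrating at $\theta_i$, then appeals to the continuous mapping theorem to pass to the sum. You avoid BvM in the main case. Because the aggregate posterior displayed before the theorem is Gaussian in closed form, you bound its variance by $\sum_i \hat{\sigma}^2_{im}$, show the shrinkage weights $w_{im}$ vanish, and close with a Chebyshev bound on the second moment about $\Theta$.

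Your version buys three things. It is self-contained and quantitative: the posterior variance is $O_p(1/m)$ and the mean error is $O_p(m^{-1/2})$. It handles the sum directly rather than going from marginal to joint convergence (which is harmless in the paper only because the limits are constants). Most importantly, it isolates the one condition that matters, namely that $\lambda^*_i\tau^*$ stays bounded away from zero. The paper's plug-in satisfies this, since $\lambda^*_i\tau^* = \bigl((\hat{\theta}_i-m_0)^2+\tau^*\bigr)/4 \ge \tau^*/4$. The paper's proof glosses over this point: because $\lambda^*_i$ depends on $\hat{\theta}_i$, the effective prior is data-dependent, and a textbook BvM statement does not directly cover that.

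The paper's route buys generality. It does not rely on conjugacy, and it would also cover the fully hierarchical model with $\lambda_i$ integrated out, which gives a Student-$t$ marginal prior. In that marginalized case, your remark that ``either way $w_{im}\to 0$ and Steps 1--3 go through'' is loose. The posterior is then no longer Gaussian and there is no weight $w_{im}$, so you genuinely need the BvM/prior-washout argument, which is exactly the paper's approach. That case lies outside the statement as the paper sets it up (plug-in $\lambda^*_i$), so it does not affect your main argument.
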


\begin{proof}

By the Bernstein-von Mises theorem, for each independent experiment $i$, with sample size $m$, the posterior distribution $\pi(\theta_i \mid \hat{\theta}_{i_m})$ converges to a Normal distribution centered at the true parameter $\theta_i$ with posterior variance approaching zero. Invoking the Continuous Mapping Theorem, as the marginal posteriors converge in probability to their respective true parameters, the posterior distribution of the sum, $\pi(\Theta \mid \boldsymbol{\hat{\theta}}_m) \to_{d} \Theta$.
\end{proof}

As inference over the estimand requires the nuisance parameter $\tau$, we use a plug-in estimate inferred from past experiments. Assume that we have a curated collection of experiments with point estimates $\hat{\eta}_{k}$ modeled as:
\begin{align}
\hat{\eta}_{k}|\eta, \gamma\sim\mathrm{N}(\eta, \gamma),
\end{align}
\noindent with parameter $\gamma$ treated as known via approaches such as meta-analysis~\cite{howes2024understanding}. We specify a prior over $\eta$ that facilitates information sharing via pooling as: 

\begin{align}
\eta|\tau\sim\mathrm{N}(0, \tau).
\end{align}

The model over the curated experiment effects and the prior yields a marginal likelihood that we optimize over to estimate $\tau$:

\begin{equation}
\tau^* = \mathrm{argmax}_{\tau} \int \mathcal{L}(\hat{\eta_{k}}|\tau) \,\mathrm{d}\eta \;=\; \mathrm{argmax}_{\tau} \;\mathrm{N}(\hat{\eta}_{k}|0, \tau + \gamma).
\end{equation}
Once this parameter has been set, local shrinkage can be modulated via the full conditional posterior distribution over $\lambda_{i}$:

\begin{align}
\lambda_{i}|\theta_{i}, \tau\sim\mathrm{Inverse-Gamma}\left(\frac{a+1}{2}, \frac{(\theta_{i} - m_0)^2 + b \tau}{2 \tau}\right).
\end{align}
This full conditional posterior has a closed form conditional on $\theta_{i}$ and $\tau$ which necessitates iterative sampling between them for inference over $\lambda_{i}$. Despite this iterative sampling step being faster than MCMC, sampling at scale can nevertheless be challenging. We therefore replace $\theta_{i}$ with its sufficient statistic $\hat{\theta}_{i}^{FV}$ and $\tau$ with $\tau^{*}$. Conditional on these parameters being set, we fix $\lambda_i$ at its posterior mode, adopting weakly informative hyperparameters $a=b=1$:

\begin{align}
\label{eq:11}
\lambda_{i}^{*} = \frac{(\hat{\theta}_{i} - m_0)^2 + \tau^*}{4\tau^*}.
\end{align}

By fixing the value of $\lambda_{i} = \lambda_{i}^{*}$, the following property holds for the BHS estimator generally:

\begin{theorem}
Under fixed values of $\lambda_{i}^{*}$, $\tau^{*}$ the posterior mean is a consistent estimator for $\theta_{i}$.
\end{theorem}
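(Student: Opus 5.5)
The plan is to work directly with the closed-form posterior mean in \eqref{eq:hs}, with $\lambda_i$ fixed at $\lambda_i^{*}$ and $\tau$ at $\tau^{*}$. Write
\begin{align*}
\hat{\theta}_i^{BHS} = w_{i}\,m_0 + (1-w_{i})\,\hat{\theta}_i, \qquad w_{i} = \frac{\hat{\sigma}^2_{i}}{\hat{\sigma}^2_{i} + \lambda_i^{*}\tau^{*}},
\end{align*}
with everything indexed by the sample size $m$ of experiment $i$. Then
\begin{align*}
\hat{\theta}_i^{BHS} - \theta_i = (\hat{\theta}_i - \theta_i) + w_{i}\,(m_0 - \hat{\theta}_i).
\end{align*}
Consistency, $\hat{\theta}_i^{BHS} \to_p \theta_i$ as $m\to\infty$, will follow from two facts: the first term vanishes in probability, and the second term is $o_p(1)$.

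For the first term I use that $\hat{\theta}_i = \hat{\theta}_i^{FV}$ in \eqref{eq:facevalue} is a ratio of sample means. By the weak law of large numbers applied to numerator and denominator, together with $E[Y(0)]\neq 0$ and the continuous mapping theorem, $\hat{\theta}_i^{FV}\to_p\theta_i$. For the same reason the delta method gives $\hat{\sigma}^2_i = O_p(1/m)$, so $\hat{\sigma}^2_i \to_p 0$. The second term is bounded by $|w_i|\,(|m_0|+|\hat{\theta}_i|)$. Since $\hat{\theta}_i$ is $O_p(1)$, it suffices to show $w_i\to_p 0$.

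The key step is to bound $\lambda_i^{*}\tau^{*}$ away from zero. From \eqref{eq:11},
\begin{align*}
\lambda_i^{*}\tau^{*} = \frac{(\hat{\theta}_i - m_0)^2 + \tau^{*}}{4} \;\ge\; \frac{\tau^{*}}{4}.
\end{align*}
Here $\tau^{*}$ is estimated from a separate curated historical collection and is held fixed, so it does not depend on $m$. Provided $\tau^{*}>0$, this gives
\begin{align*}
0 \le w_i \le \frac{\hat{\sigma}^2_i}{\hat{\sigma}^2_i + \tau^{*}/4} \to_p 0,
\end{align*}
again by the continuous mapping theorem. Combining the pieces with Slutsky's theorem yields $\hat{\theta}_i^{BHS}\to_p\theta_i$.

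The main obstacle is making the condition $\tau^{*}>0$ precise. The marginal-likelihood maximizer $\mathrm{N}(\hat{\eta}_k\mid 0,\tau+\gamma)$ can sit on the boundary $\tau^{*}=0$ when the historical effects are less dispersed than $\gamma$. In that case $\lambda_i^{*}$ is undefined, and the global-only estimator would collapse to $m_0$, which is not consistent. I would therefore state positivity of $\tau^{*}$ explicitly as a standing assumption, or truncate it at a small $\epsilon>0$.

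I would also note that $\lambda_i^{*}$ depends on $\hat{\theta}_i$. Consistency does not need $\lambda_i^{*}$ to converge; the only property used is the lower bound $\lambda_i^{*}\tau^{*}\ge\tau^{*}/4$. This is the sense in which ``fixed values'' should be read.
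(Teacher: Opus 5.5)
Your proof is correct and follows the same route as the paper's: the weight $\hat{\sigma}^2_i/(\hat{\sigma}^2_i+\lambda_i^{*}\tau^{*})$ on $m_0$ vanishes as $\hat{\sigma}^2_i\to 0$, while $\hat{\theta}_i\to\theta_i$. Your version is more careful than the paper's one-line argument, which treats $\lambda_i^{*}$ as a constant, relies on unbiasedness of $\hat{\theta}_i$, and leaves $\tau^{*}>0$ implicit; your lower bound $\lambda_i^{*}\tau^{*}\ge\tau^{*}/4$ handles the dependence of $\lambda_i^{*}$ on $\hat{\theta}_i$ cleanly.
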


\begin{proof}
Let $\hat{\sigma}^{2}_{i_m}$ denote the variance of the sampling distribution of $\hat{\theta}_{i}$ based on $m$ units of observation for an unbiased estimator $\hat{\theta}_{i}$. As $m\rightarrow \infty$, then $\hat{\sigma}^{2}_{i_m} \rightarrow 0$ and $\frac{\hat{\sigma}^{2}_{i_m}}{\hat{\sigma}^{2}_{i_m} + \lambda_{i}^{*}\tau}m_{0} + \frac{{\lambda^{*}_{i}\tau}}{\hat{\sigma}^{2}_{i_m} + \lambda_{i}^{*}\tau}\hat{\theta_{i}}\rightarrow \theta_{i}$.
\end{proof}

In practice, we set $m_{0} = 0$, which based on Proposition 3 in~\cite{van2021significance} typically yields lower bias than the Frequentist estimator. $\hat{\sigma}^{2}_{i}$ is the standard plug-in for the variance of the estimator $\hat{\theta}_{i}^{FV}$. We set $a=b=1$ to induce a diffuse Cauchy prior over $\theta$. Under these parameters, the following property also holds generally:

\begin{theorem}
\label{th:red}
Let $\hat{\theta}_i \mid \theta_i \sim \mathcal{N}(\theta_i,\hat{\sigma}_i^2)$. Consider the BHS estimator where the global variance $\tau=\tau^*$ is fixed and the local shrinkage parameter is set to its conditional posterior mode $\lambda_i=\lambda^{*}_{i}$ with hyper-parameters $a=b=1$ and substituting $\theta_i$ with $ \hat{\theta}_i$ as in Equation~\ref{eq:11}. Then $\hat{\theta}_i^{BHS}$ satisfies the redescending property:
\begin{equation}
\lim_{|\hat{\theta}_i|\to\infty}\left| \hat{\theta}_i^{BHS} - \hat{\theta}_i \right| \;=\; 0.
\end{equation}

\end{theorem}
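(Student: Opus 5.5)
The plan is to write the BHS posterior mean from Equation~\ref{eq:hs} explicitly, with $m_0=0$, $\lambda_i=\lambda_i^*$ and $\tau=\tau^*$, and bound the distance to $\hat{\theta}_i$ directly. With $m_0=0$ the posterior mean is
\begin{equation*}
\hat{\theta}_i^{BHS} = \frac{\lambda_i^*\tau^*}{\hat{\sigma}_i^2+\lambda_i^*\tau^*}\,\hat{\theta}_i ,
\end{equation*}
so that
\begin{equation*}
\hat{\theta}_i - \hat{\theta}_i^{BHS} = \frac{\hat{\sigma}_i^2}{\hat{\sigma}_i^2+\lambda_i^*\tau^*}\,\hat{\theta}_i .
\end{equation*}
The whole argument therefore reduces to showing that this shrinkage term, times $\hat{\theta}_i$, vanishes as $|\hat{\theta}_i|\to\infty$. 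Throughout, $\hat{\sigma}_i^2>0$ and $\tau^*>0$ are held fixed.

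Next I substitute Equation~\ref{eq:11} with $m_0=0$, which gives $\lambda_i^*\tau^* = (\hat{\theta}_i^2+\tau^*)/4$. Hence
\begin{equation*}
\left|\hat{\theta}_i^{BHS}-\hat{\theta}_i\right| = \frac{4\hat{\sigma}_i^2\,|\hat{\theta}_i|}{4\hat{\sigma}_i^2+\tau^*+\hat{\theta}_i^2}\le \frac{4\hat{\sigma}_i^2}{|\hat{\theta}_i|},
\end{equation*}
and this tends to $0$ as $|\hat{\theta}_i|\to\infty$. The key structural point is that the effective prior variance $\lambda_i^*\tau^*$ grows quadratically in $\hat{\theta}_i$. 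The shrinkage weight therefore decays like $|\hat{\theta}_i|^{-2}$, which beats the linear growth of $\hat{\theta}_i$. This is the familiar heavy-tail (Cauchy-type) tail-robustness phenomenon.

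I expect the only real obstacle to be one of interpretation rather than calculation. Namely, I must be clear that $\hat{\sigma}_i^2$ and $\tau^*$ are fixed, and do not scale with $\hat{\theta}_i$, while the limit is taken. I also need to check that general $m_0$ does not change the conclusion. In that case
\begin{equation*}
\hat{\theta}_i-\hat{\theta}_i^{BHS}=\frac{\hat{\sigma}_i^2(\hat{\theta}_i-m_0)}{\hat{\sigma}_i^2+\bigl((\hat{\theta}_i-m_0)^2+\tau^*\bigr)/4},
\end{equation*}
which is $O(1/|\hat{\theta}_i|)$ by the same bound. I would state the result for the paper's choice $m_0=0$ and remark that it extends to any fixed $m_0$.
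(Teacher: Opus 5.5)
Your proof is correct and follows the paper's own argument: both write the $m_0=0$ posterior mean as $\frac{\lambda_i^*\tau^*}{\hat{\sigma}_i^2+\lambda_i^*\tau^*}\hat{\theta}_i$, substitute $\lambda_i^*\tau^*=(\hat{\theta}_i^2+\tau^*)/4$, and show that $4\hat{\sigma}_i^2|\hat{\theta}_i|/(\hat{\theta}_i^2+\tau^*+4\hat{\sigma}_i^2)\to 0$. The paper closes this with a linear-versus-quadratic growth remark; your explicit bound $4\hat{\sigma}_i^2/|\hat{\theta}_i|$ makes that step quantitative, and your extension to an arbitrary fixed $m_0$ is also valid.
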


\begin{proof}
Using the standard Normal-Normal update with $m_0=0$ and effective prior variance $v_i := \lambda_i^*\tau^*$, the posterior mean is:
\begin{equation}
\hat{\theta}_i^{BHS}
=
\frac{v_i}{v_i+\sigma_i^2}\,\hat{\theta}_i.
\end{equation}
The magnitude of the shrinkage (difference relative to face value) is:
\begin{equation}
\left| \hat{\theta}_i^{BHS} - \hat{\theta}_i \right|
=
|\hat{\theta}_i| \left( 1 - \frac{v_i}{v_i+\sigma_i^2} \right)
=
|\hat{\theta}_i|\frac{\sigma_i^2}{v_i+\sigma_i^2}.
\end{equation}
Substituting the plug-in value $v_i = \lambda_i^*\tau^* = \frac{\hat{\theta}_i^{\,2}+\tau^*}{4}$:
\begin{equation}
\left| \hat{\theta}_i^{BHS} - \hat{\theta}_i \right|
=
\frac{\sigma_i^2 |\hat{\theta}_i|}{\frac{\hat{\theta}_i^{\,2}+\tau^*}{4} + \sigma_i^2}
=
\frac{4\sigma_i^2|\hat{\theta}_i|}{\hat{\theta}_i^{\,2} + (\tau^*+4\sigma_i^2)}.
\end{equation}
As $|\hat{\theta}_i|\to\infty$, the numerator grows linearly while the denominator grows quadratically. Thus, the expression approaches 0.
\end{proof}

\paragraph{Contrast with Global Shrinkage}
Under Global Shrinkage, $\lambda_i^* \equiv 1$, implying a fixed prior variance $v_i=\tau^*$. The shrinkage is:
\begin{equation}
\left| \hat{\theta}_i^{Global} - \hat{\theta}_i \right|
=
|\hat{\theta}_i|\frac{\sigma_i^2}{\tau^*+\sigma_i^2}.
\end{equation}
This function grows linearly with $|\hat{\theta}_i|$ and is unbounded. In contrast, the BHS plug-in estimator's bias vanishes for extreme observations, ensuring that the strongest signals are preserved.

\subsection{Modeling Selection in Bayesian Analyses}\label{section:theory}
The Winner's Curse occurs when selection alters the sampling distribution of the estimator. Frequentist methods that correct for this require explicit selection model adjustment~\cite{andrews2024inference}. For Bayesians, this is less clear: \citet{dawid1994selection} argues that posteriors already condition on observed data, making a selection model irrelevant. However, this may not account for the selection mechanism.

In Bayesian analysis, whether explicit adjustment is needed depends on how parameters and data interact with selection~\cite{yekutieli2012adjusted, woody2022optimal}. If parameters and data are jointly sampled, no adjustment is required. Conversely if parameters are fixed from their marginal distribution with data sampled conditionally, adjustment is needed. For example:

\begin{itemize}
\item{Experiments testing different product changes. Data is used to make launch decisions via predetermined criteria, and we estimate effect sizes for launched experiments. Here, a Bayesian approach does \textit{not} require selection adjustment.}
\item{A single proposed change to a recommender system, tested multiple times across experiments. Using only experiments that pass launch criteria to estimate the change's magnitude \textit{would} require selection adjustment.}
\end{itemize}

The following theorem describes these properties generally.

\begin{theorem} \label{theorem:jointselection}
With joint selection, the selective and unadjusted posteriors are equivalent, eliminating the need for a selection model.
\end{theorem}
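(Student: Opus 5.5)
The plan follows the framework of \citet{yekutieli2012adjusted}: write out the joint law of $(\theta_i, \hat{\theta}_i)$ restricted to the selection event, then show that the selection factor cancels under Bayes' rule. Let $S = \{\hat{\theta}_i \in A\}$ denote the selection event, for example $A = \{\hat{\theta} : \hat{\theta} > c\}$ as in section~\ref{section:assumptions}. Write $\pi(\theta_i)$ for the BHS prior, marginalising over $\lambda_i$ if desired, and $f(\hat{\theta}_i \mid \theta_i)$ for the Normal sampling density of section~\ref{section:model}. The unadjusted posterior is $\pi(\theta_i \mid \hat{\theta}_i) \propto \pi(\theta_i) f(\hat{\theta}_i \mid \theta_i)$. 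I will formalise ``joint selection'' to mean that the pair $(\theta_i, \hat{\theta}_i)$ is drawn from $\pi(\theta_i) f(\hat{\theta}_i \mid \theta_i)$ and then retained only if $\hat{\theta}_i \in A$. The selective joint density is therefore
\begin{equation*}
p_S(\theta_i, \hat{\theta}_i) = \frac{\pi(\theta_i) f(\hat{\theta}_i \mid \theta_i)\,\mathbf{1}\{\hat{\theta}_i \in A\}}{\Pr(\hat{\theta}_i \in A)},
\end{equation*}
where the denominator is a constant that depends on neither $\theta_i$ nor $\hat{\theta}_i$.

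The first step conditions this selective joint density on the observed $\hat{\theta}_i \in A$. The indicator equals one on the observed data, and the constant $\Pr(\hat{\theta}_i\in A)$ cancels between numerator and normalising integral. This leaves $p_S(\theta_i \mid \hat{\theta}_i) = \pi(\theta_i) f(\hat{\theta}_i\mid\theta_i) / \int \pi(\theta) f(\hat{\theta}_i \mid \theta)\,d\theta$, which is exactly the unadjusted posterior. The second step extends the argument to the BHS hierarchy by carrying $\lambda_i$ (and $\tau$ fixed at $\tau^*$) inside $\pi$. Because the selection indicator is a function of the data only, the cancellation holds for the joint posterior of $(\theta_i, \lambda_i)$. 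It therefore holds for the conditional posterior in Eq.~\eqref{eq:hs} and for the plug-in version with $\lambda_i^*$. The third step handles the aggregate. Assuming independence across experiments, the selective posterior of $(\theta_1,\dots,\theta_N)$ factorises, so the posterior of $\Theta$ used above is also unaffected by selection.

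For contrast, and to justify the examples in the text, I will also write the fixed-parameter case. There $\theta_i$ is drawn once from its marginal, and only $\hat{\theta}_i$ is resampled subject to selection, so the selective likelihood is $f(\hat{\theta}_i\mid\theta_i)\mathbf{1}\{\hat{\theta}_i\in A\}/\Pr(\hat{\theta}_i\in A\mid\theta_i)$. The denominator now depends on $\theta_i$ and does not cancel, so an adjustment is required. This shows that the hypothesis of joint selection is essential.

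The main obstacle is making ``joint selection'' precise enough that the cancellation is rigorous. The key point is that the truncation must act on the joint distribution of parameter and data, not on the conditional distribution of the data given a fixed parameter. I will state this as the defining assumption and argue that it matches the online setting, where each experiment tests a distinct change whose effect is a fresh draw from the population prior. A minor technical point is ensuring that $\Pr(\hat{\theta}_i\in A) > 0$, which holds because the Normal–Normal (or Normal–scale-mixture) marginal has full support.
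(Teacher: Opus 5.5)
Your proposal is correct and follows essentially the same route as the paper: you truncate the joint law $\pi(\theta)f(\hat\theta\mid\theta)$ to the selection event and note that the normalising constant $\Pr(\hat\theta\in A)$ does not depend on $\theta$. You then let it and the indicator cancel in Bayes' rule, exactly as the paper does via $m_S(\hat\theta)=m(\hat\theta)\mathbf{1}(\hat\theta\in S)/\pi(S)$. Your extensions to the $\lambda_i$ hierarchy and the aggregate $\Theta$, and your contrasting fixed-parameter case, go beyond what the paper writes down but are sound.
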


\begin{proof}
Under joint selection truncated to the selected sample $S$, let the joint distribution of $\hat{\theta}, \theta$ be $\pi_{S}(\hat{\theta}, \theta) = \pi(\theta)\cdot f(\hat{\theta}|\theta)\cdot \frac{1(\hat{\theta}\in S)}{\pi(S)}$, where the marginal selection probability is
$\pi(S)=\int\int_{S}\pi(\hat{\theta}|\theta) \pi(\theta) \allowbreak \mathrm{d}\hat{\theta} \mathrm{d}\theta$. The marginal density of the data under selection is $m(\hat{\theta}) = \int f(\hat{\theta}|\theta)\pi(\theta)\mathrm{d}\theta$. Within the selected sample, this marginal density is $m_S(\hat{\theta})=\int 1(\hat{\theta}\in S)\pi(\theta)\frac{f(\hat{\theta}|\theta)}{\pi(S)}\mathrm{d}\theta$ $=\frac{m(\hat{\theta})}{\pi(S)}\cdot 1(\hat{\theta}\in S)$. Next consider the posterior under selection $\pi_{S}(\theta|\hat{\theta}) = \frac{\pi_{S}(\theta, \hat{\theta})}{m_{S}(\hat{\theta})} = \frac{\pi(\theta)\cdot f(\hat{\theta}|\theta)}{m(\hat{\theta})} = \pi(\theta|\hat{\theta})$ which is identical to the unadjusted posterior.
\end{proof}
Many applied settings fit the joint distribution paradigm, where well-motivated priors suffice to overcome Winner's curse. Validating the modeling choice and ensuring prior calibration are vital.

\subsection{Validation via Predictive Checking}\label{section:validation}
Given the complexities of validation in experimental settings, we take inspiration from \textit{predictive checking}~\cite{rubin1998more, gelman1996posterior}. The method works by using the posterior distribution of model parameters $\theta$ to generate simulated "replicated" datasets $\hat{\theta}_{rep}$, which represent what the data should look like if the model $M$ was true. These simulated datasets are then compared to the actual observed data $\hat{\theta}$ using test statistics or graphical displays. If the real data appear unusual or extreme relative to the replicated data, this suggests potential model misspecification, thereby classifying this as an assessment of \textit{ goodness-of-fit}.

The computation required for this assessment is a straightforward byproduct of posterior predictive simulation. Performance can be benchmarked via any statistic of interest $T(\cdot)$. If $g(\cdot)$ denotes a comparison between $T(\hat{\theta})$ and $T(\hat{\theta}_{rep})$, then the following is a summary of goodness-of-fit:
\begin{align}
p[g(T(\hat{\theta}_{rep}), T(\hat{\theta}))|M, \hat{\theta}].
\end{align}

To operationalize this, an intuitive comparison is the tail area probability $p[g(T(\hat{\theta}_{rep}), T(\hat{\theta}))|M, \hat{\theta}] = p[T(\hat{\theta}_{rep}) \geq T(\hat{\theta})$]. It is an analogue to a p-value that can be computed for any model of choice averaged over the posterior distribution of the parameter of interest under the model. By redefining $g$, the same construction can be used for quantities like coverage for uncertainty intervals, or mean-squared error for accuracy. This concept can be augmented by data splitting strategies such as replication studies or data fission~\cite{leiner2023data}. We will demonstrate the usage of this technique in the empirical analyses performed.

\section{Simulation Studies}

To demonstrate the expected performance of different estimation approaches under different conditions, we evaluate them in simulation. We consider the case where the prior is correctly specified coupled with cases where the prior is misspecified in ways that we expect in real-world online experimentation platforms:
\begin{enumerate}
    \item{{\bf Misspecified mean:} The mean used for analysis is different from the mean parameter used to generate the true effects in the simulation.}
    \item{{\bf Heavy-tailed distributions}: The distribution that generates the effects in simulation has heavier tails than the prior used for analysis.}
    \item{{\bf Hidden selection:} The distribution used to generate the true effects is a correlated bivariate distribution. The selection criteria applies to both variables, but analysis only considers the target of inference.}
\end{enumerate}

In each scenario, we simulate a collection of experiments $i = 1, \dots, N$. We compare three distinct estimation strategies based on the formulation in Section~\ref{section:formulation}.
\begin{itemize}
    \item \textbf{Frequentist Face Value ($\hat{\theta}_{i}^{FV}$):} The conventional Frequentist ratio estimator (Eq.~\ref{eq:facevalue}), which uses only the observed data from the current experiment without adjustment.
    \item \textbf{Bayesian Global Shrinkage ($\hat{\theta}_{i}^{G}$):} The posterior mean of a Bayesian model where the shrinkage factor is shared globally across all experiments (Eq.~\ref{eq:hs}  with $\lambda_i^*=1$).
    \item \textbf{Bayesian Hybrid Shrinkage ($\hat{\theta}_{i}^{H}$):} Our proposed estimator (Eq.~\ref{eq:hs}), which utilizes the posterior mode of the local shrinkage parameter $\lambda_i^*$ to modulate the global shrinkage based on experiment-specific evidence.
\end{itemize}

Let $\hat{\theta}_{i}^{k}$ denote the point estimate for the true effect $\theta_i$, where $k \in \{FV, G, H\}$ indexes the estimation approach. We define the estimation error as $\delta_{i}^{k} = \hat{\theta}_{i}^{k} - \theta_i$. Conventional 90\% confidence intervals are used for the Face Value approach, and 90\% central credible intervals are used for the Bayesian approaches. We use the selection indicator $S_i \in \{0, 1\}$ to denote which experiments are selected for inference. The approaches are compared in terms of three performance metrics:
\begin{itemize}
    \item {\textit{Mean Squared Error}, $MSE^{k} =  \frac{1}{N} \sum_{i=1}^{N} (\delta^{k}_{i})^2$}
    \item \textit{Bias}, $\Delta^{k} = \frac{1}{N} \sum_{i=1}^{N} \delta^{k}_{i}$
    \item \textit{Coverage Probability} for 90\% confidence/credible intervals
\end{itemize}

\subsection{Inference Under A Correctly-Specified Prior} \label{section:correctprior}

We first consider a setting where effects are drawn from a normal distribution with the same parameters as those used for the analysis prior. We generate draws as:
\begin{align}
    \theta_{i} &\sim \mathrm{N}(\mu, \epsilon)
\end{align}
where $\mu$ and $\epsilon$ are constants.

For each experiment $i$, the face-value estimate $\hat{\theta}^{FV}_i$ of the effect $\theta_i$ is simulated by taking a draw from:
\begin{align}
    \hat{\theta}^{FV}_{i} &\sim \mathrm{N}(\theta_i, \hat{\sigma}_{i}^2)
\end{align}
where $\hat{\sigma_i}$ is the estimated standard error in each experiment, itself simulated by drawing from:
\begin{align}
    \hat{\sigma_i} &\sim \mathrm{N}(\sigma_i, \kappa)
\end{align}
where $\sigma$ and $\kappa$ are constants.

Experiments are selected for inference if the face-value estimate exceeds a threshold $\hat{\theta_i} > C_{i}$, where $C_{i}$ corresponds to the threshold in a standard t-test with a level of significance $\alpha = 0.05$. Other selection rules are also possible and would typically lead to similar selection effects.

\paragraph{\bf \textit{Global Shrinkage Eliminates Selection Bias in Expectation.}}
Fixing the values of $\mu$, $\epsilon$, $\sigma$, and $\kappa$, and setting $m_0 = \mu$ and $\tau = \epsilon$, the Face Value estimator is subject to a positive upward bias conditional on selection ($S_i=1$). The Bayesian estimator with global shrinkage, however, remains unbiased in expectation even when conditioning on selection, consistent with Theorem~\ref{theorem:jointselection}.

This calibration is verified in Figure~\ref{fig:histograms}, which shows the distribution of the error $\delta_{i}^{k}$ for $k \in \{FV, G\}$ across $N=10,000$ simulated experiments. For an unbiased estimator, the error distribution should be centered at zero. For the Bayesian estimator with Global Shrinkage this is always the case, irrespective of the selection threshold. The Face Value estimator on the other hand consistently demonstrates a bias which is entirely driven by the selection mechanism. Since the prior is correctly specified in this setting, BHS is excluded from the comparison.

\begin{figure}
    \centering
    \includegraphics[width=1.0\linewidth]{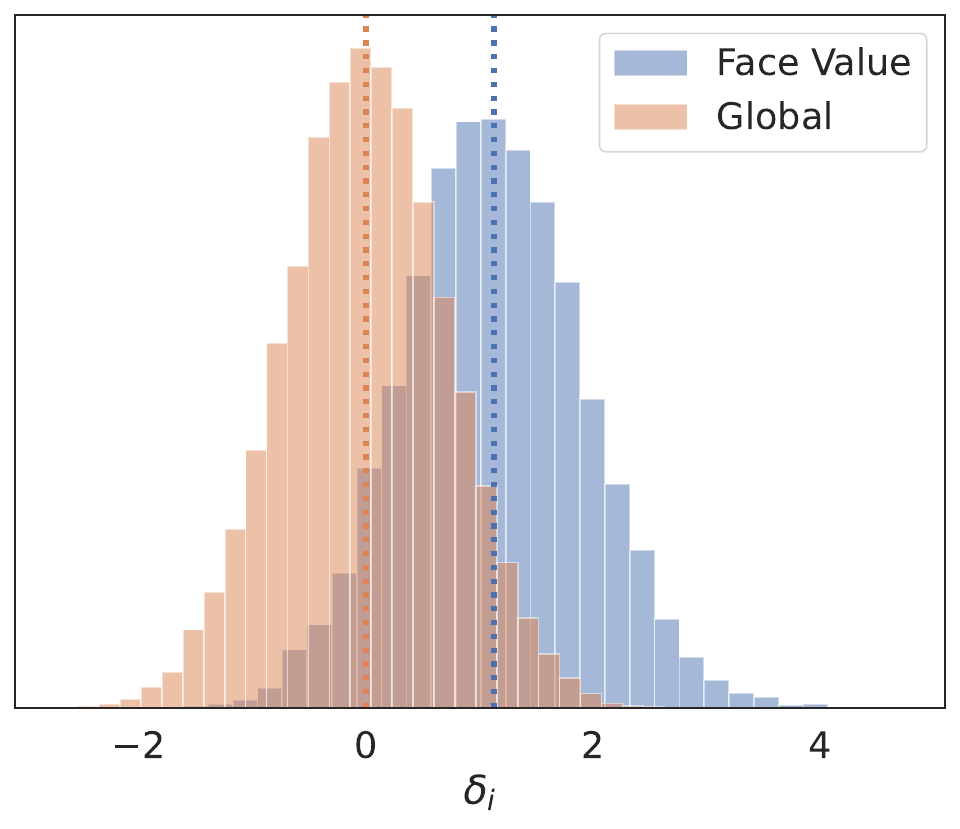}
    \caption{$\delta_i = \hat{\theta}_i - \theta_i$ across $N = 10,000$ simulated experiments for the Face Value Frequentist estimator and the Bayesian estimator with Global Shrinkage.}
    \label{fig:histograms}
\end{figure}

\paragraph{\bf \textit{The Bias Due to Selection Increases as Experiment Power Decreases}}

We now fix $\mu$, $\epsilon$ and $\kappa$, and compare the three estimators as a function of sampling variance in Figure~\ref{fig:correctly_specified}. For Face Value, the MSE increases and the coverage probability decreases as the sample variance increases. For the Bayesian estimator with global shrinkage, the MSE, bias and coverage probability remain optimal. When global shrinkage prior reflects the correct model exactly, the posterior converges to the correct estimand with minimal requirements on the data. Under this setting no other model is expected to perform better. For the BHS estimator, the MSE and coverage probability do increase and decrease respectively, but at a much slower rate than the Face Value estimator, reflecting Theorem~\ref{th:red}.

\begin{figure}
    \centering
    \subfloat{
        \includegraphics[width=0.65\textwidth]{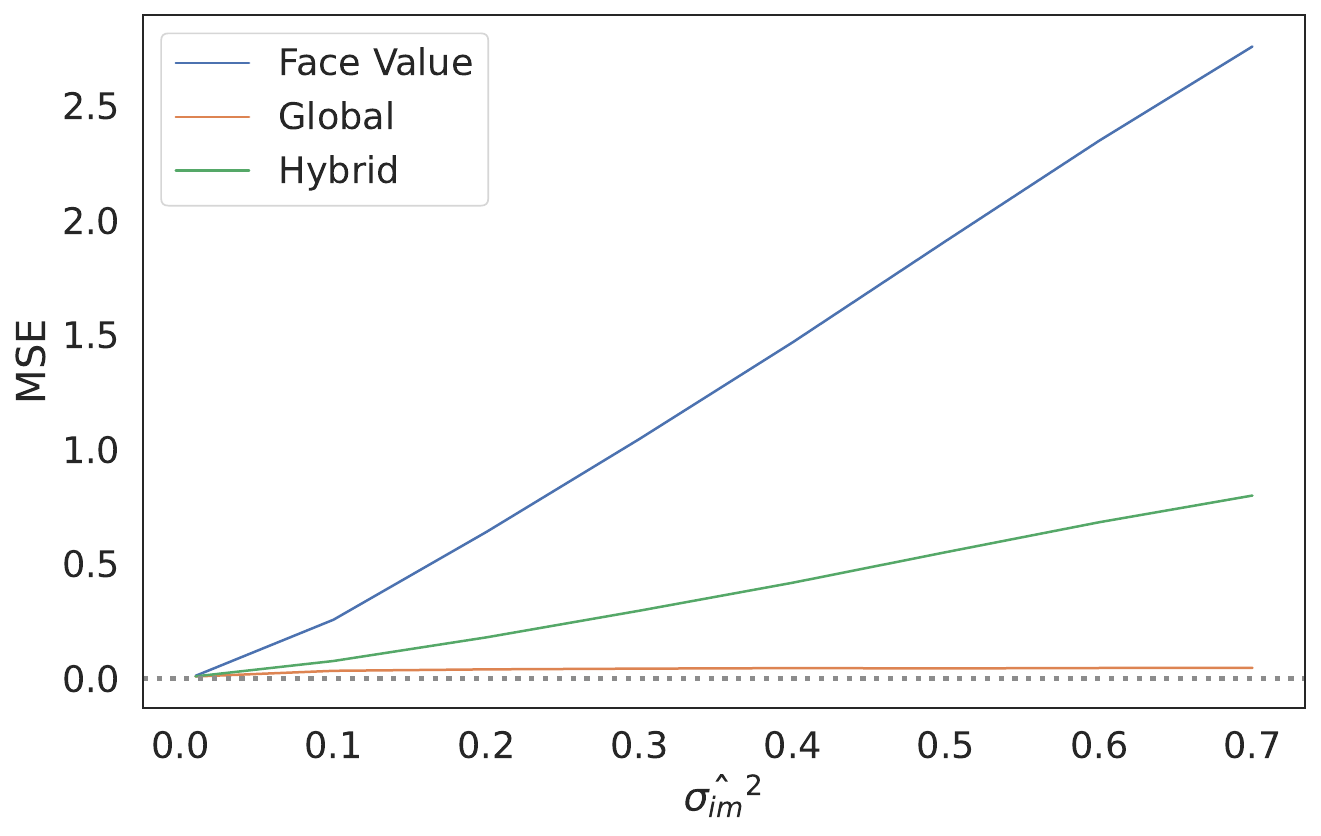}
        \label{subfig:mse_power}
    } \\
    \subfloat{
        \includegraphics[width=0.65\textwidth]{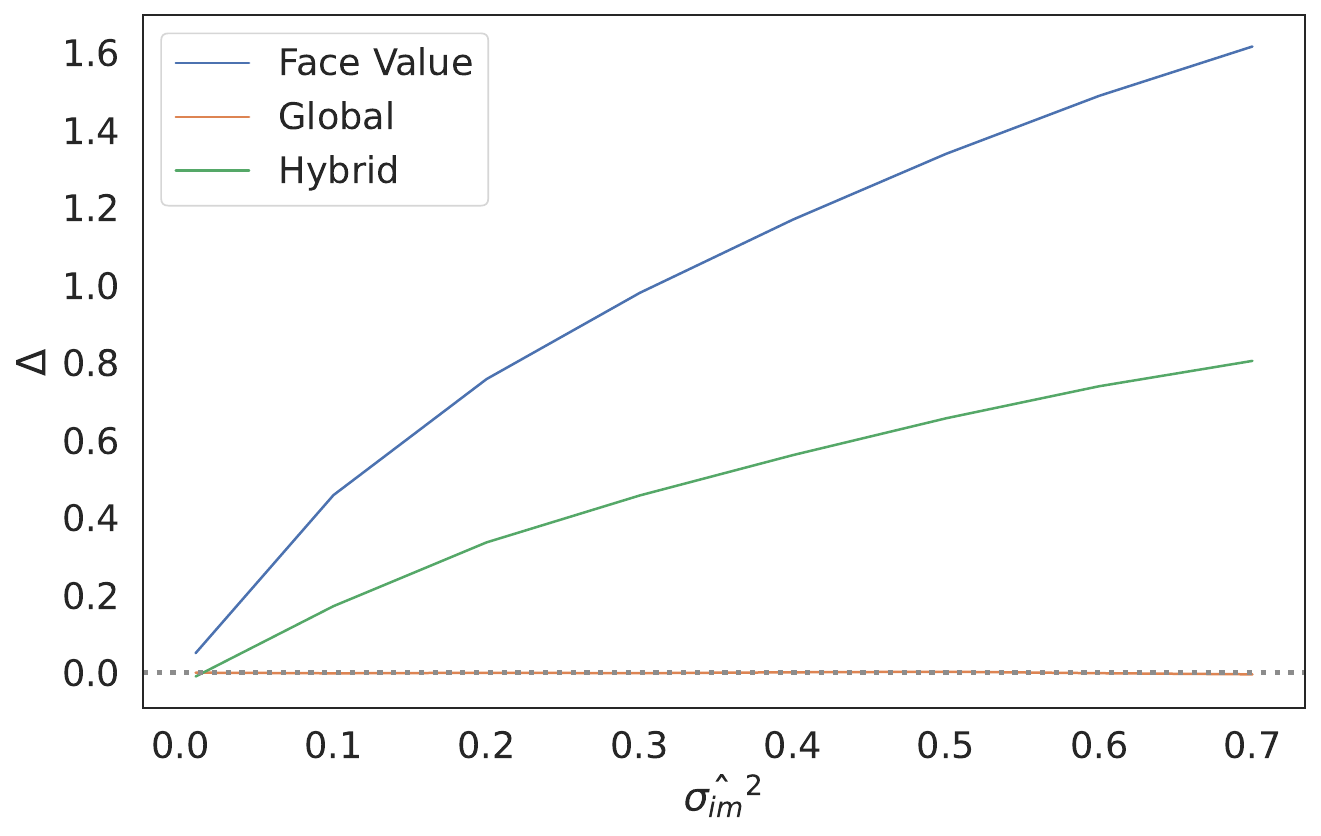}
        \label{subfig:bias_power}
    } \\
    \subfloat{
        \includegraphics[width=0.65\textwidth]{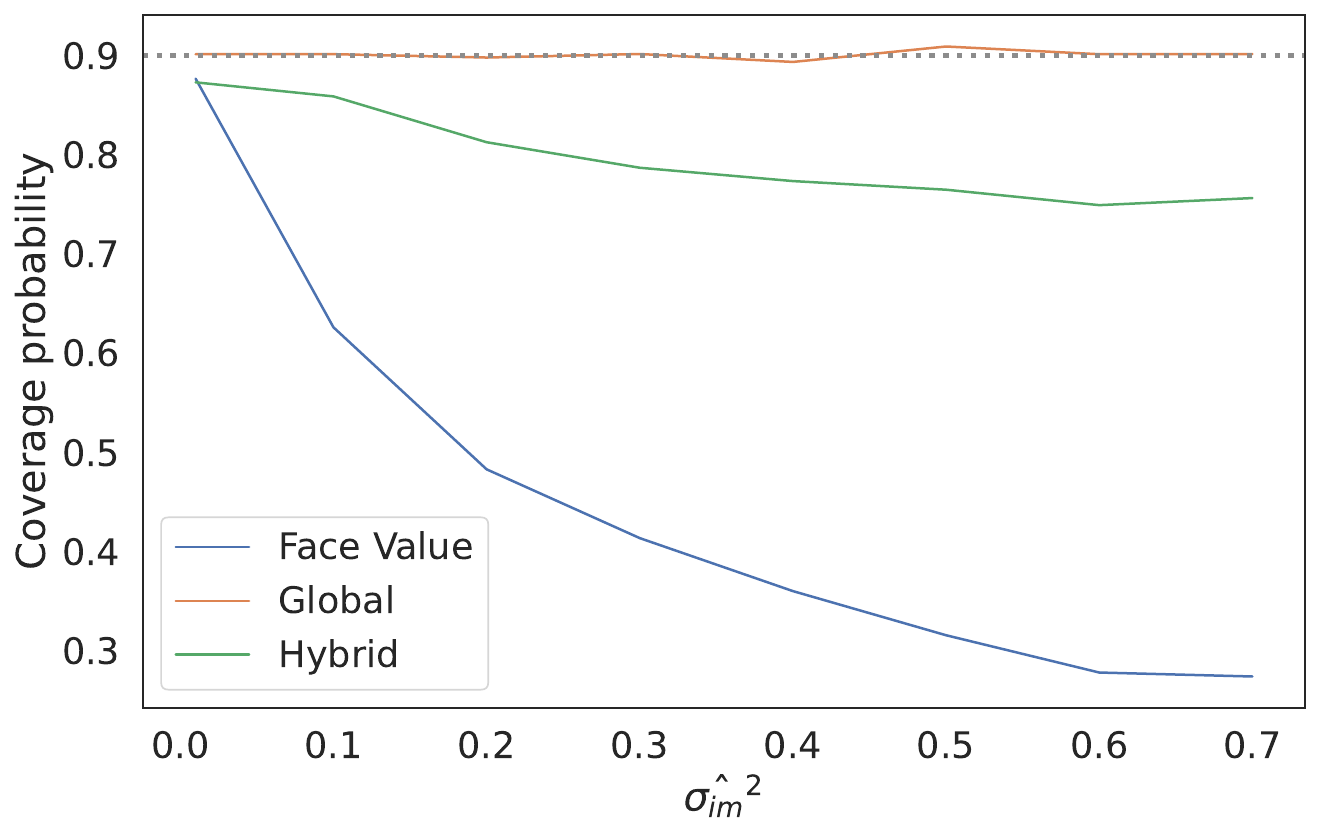}
        \label{subfig:coverage_power}
    }
    \caption{MSE (top), bias ($\Delta$, middle) and coverage probability (bottom) as a function of sampling variance for face value, Bayesian with global shrinkage and BHS approaches.}
    \label{fig:correctly_specified}
\end{figure}

\subsection{Inference Under A Misspecified Prior}

\subsubsection{Inference Under A Misspecified Prior Mean}

We now consider a setting where effects are drawn from a normal distribution, but the prior mean for analysis is misspecified. The simulation proceeds as in Section~\ref{section:correctprior}, but we set $m_0 = 0$ and vary $\mu$. Since effects are small in relevant applied settings, we expect $m_0 = 0$ to often be a practical choice as in~\cite{van2021significance}, but may induce more shrinkage than necessary if true effects are larger. See results in Figure~\ref{fig:misspecified}, top panel.

With increasing $\mu$, the performance of the Bayesian estimator with global shrinkage deteriorates, while the performance of the face-value estimator improves. This is intuitive: as the true effects grow larger relative to the sampling variance, the impact of selection on the classical estimator is expected to lessen. Conversely, as the true effects grow larger, assuming a prior mean $m_0$ with a singular variance leads to increasingly too-aggressive shrinkage.

The BHS estimator demonstrates superior properties to the face-value estimator across the range of prior mean values. When $\mu$ is small the Bayesian estimator with global shrinkage demonstrates better properties. But as the degree of misspecification increases, the hybrid approach proves significantly more robust across all performance metrics.

\subsubsection{Inference Under Heavy-Tailed Distributions}
While a Normal prior is an attractive choice in terms of computational efficiency, the approximation will suffer if the true distribution of effects is heavy-tailed. To examine robustness to this, we consider a setting where effects are drawn from a distribution with heavier tails than the distribution assumed by the analysis prior.

We simulate a collection of experiments from a t-distribution with $\nu$ degrees of freedom:
\begin{align}
    \theta_{i} \sim t_\nu(\mu, \epsilon).
\end{align}
Selection and estimation then proceed as described in section~\ref{section:correctprior}. These results are shown in the middle panel of Figure~\ref{fig:misspecified}.

As $\nu$ increases, t-distribution approaches the Normal distribution. The tail behavior diminishes and the prior assumption becomes more sensible. Hence for large $\nu$, the performance of the estimation approaches is similar to that seen in section \ref{section:correctprior}.

As $\nu$ decreases the tail behavior increases and the assumption of a Normal prior becomes less appropriate. In the extreme the shrinkage induced by the Bayesian estimator with global shrinkage is worse even than the face-value estimator. The BHS estimator is superior to the face-value estimator across all of the performance metrics for any value of $\nu$. For low $\nu$ it is a significant improvement against the Bayesian Global Shrinkage benchmark.

\subsubsection{Inference Under Hidden Selection}
Practitioners often consider additional information at selection time that is not subsequently used for inference~\cite{Li2027optimalPolicies}. A typical example is selecting a launch based on a vector of metrics (e.g., clicks and revenue), but only estimating the effect on one primary metric (e.g., revenue) for the final report.

To model this, we consider a setting where effects are drawn from a Bivariate Normal distribution $\theta_i \sim N_2(\mu, \Sigma)$, where $\theta_i = (\theta_{x,i}, \theta_{y,i})$ and $\Sigma$ allows for correlation $\rho$ between the dimensions. Selection is applied to both dimensions: an experiment is selected ($S_i=1$) only if \textit{both} estimates $\hat{\theta}_{x,i}$ and $\hat{\theta}_{y,i}$ exceed their respective thresholds. However, inference is performed solely on the target dimension $\theta_{y,i}$ using the univariate methods described in Section 4.1. These results are shown in the lower panel of Figure~\ref{fig:misspecified}.

As $\rho \to 0$, there is no correlation between $\theta_x$ and $\theta_y$, and the performance of each of the estimation approaches tends to that described in section~\ref{section:correctprior}. In the other direction, the performance of the Bayesian estimator with global shrinkage degrades. Though the performance of the face value approach improves, in terms of bias and MSE it remains inferior to the Bayesian global shrinkage approach, only improving on it in terms of coverage probability for high correlation.

BHS also outperforms the face-value approach in terms of MSE and bias across the full range of $\rho$. As $\rho$ increases, the Bayesian global shrinkage estimator leads to extreme shrinkage, while its hybrid counterpart outperforms it in terms of bias. The hybrid approach also provides good coverage properties across the range, though it becomes inefficient due to intervals overcover, such that the MSE remains inferior to the global approach.

\begin{figure*}
    \centering
    \subfloat{
        \includegraphics[width=0.3\textwidth]{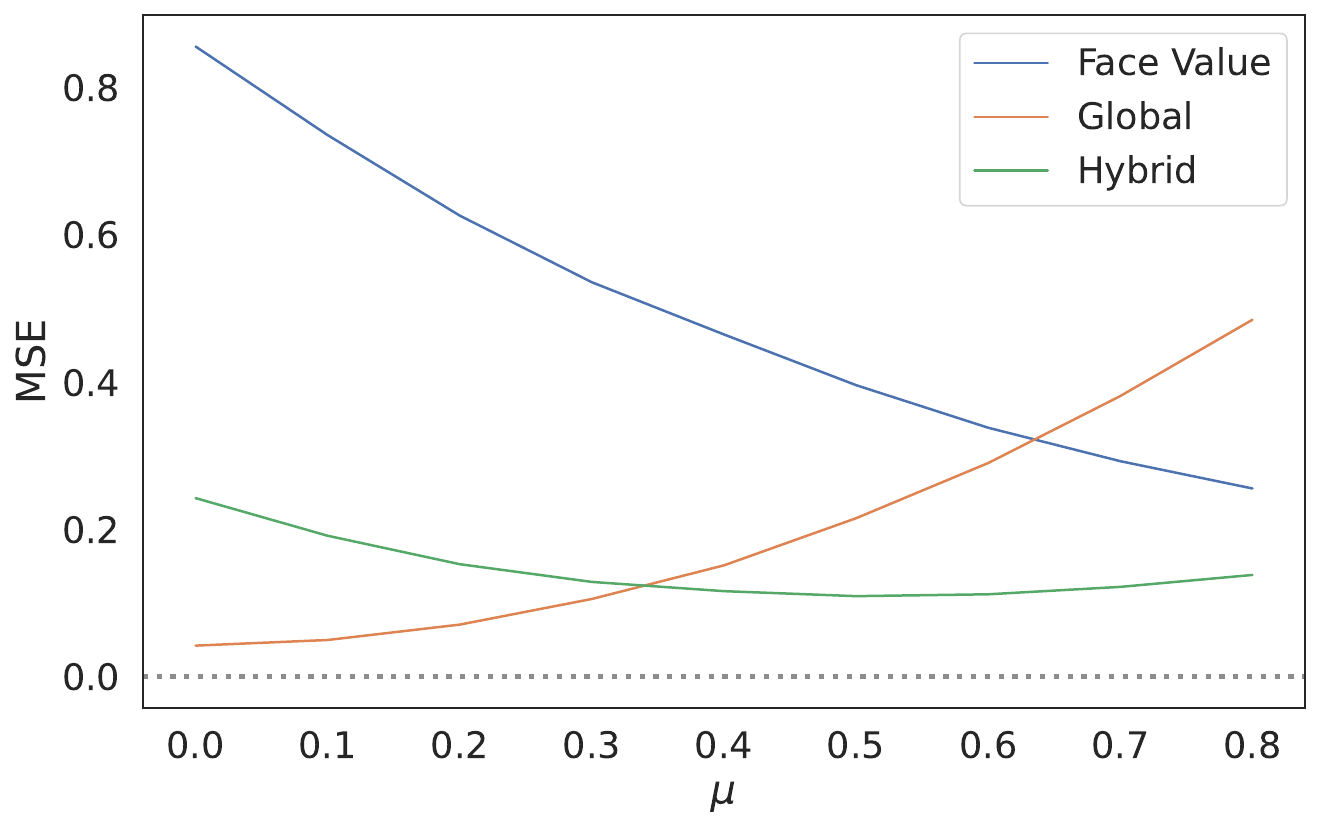}
        \label{subfig:mse_mu}
    }\hfill
    \subfloat{
        \includegraphics[width=0.3\textwidth]{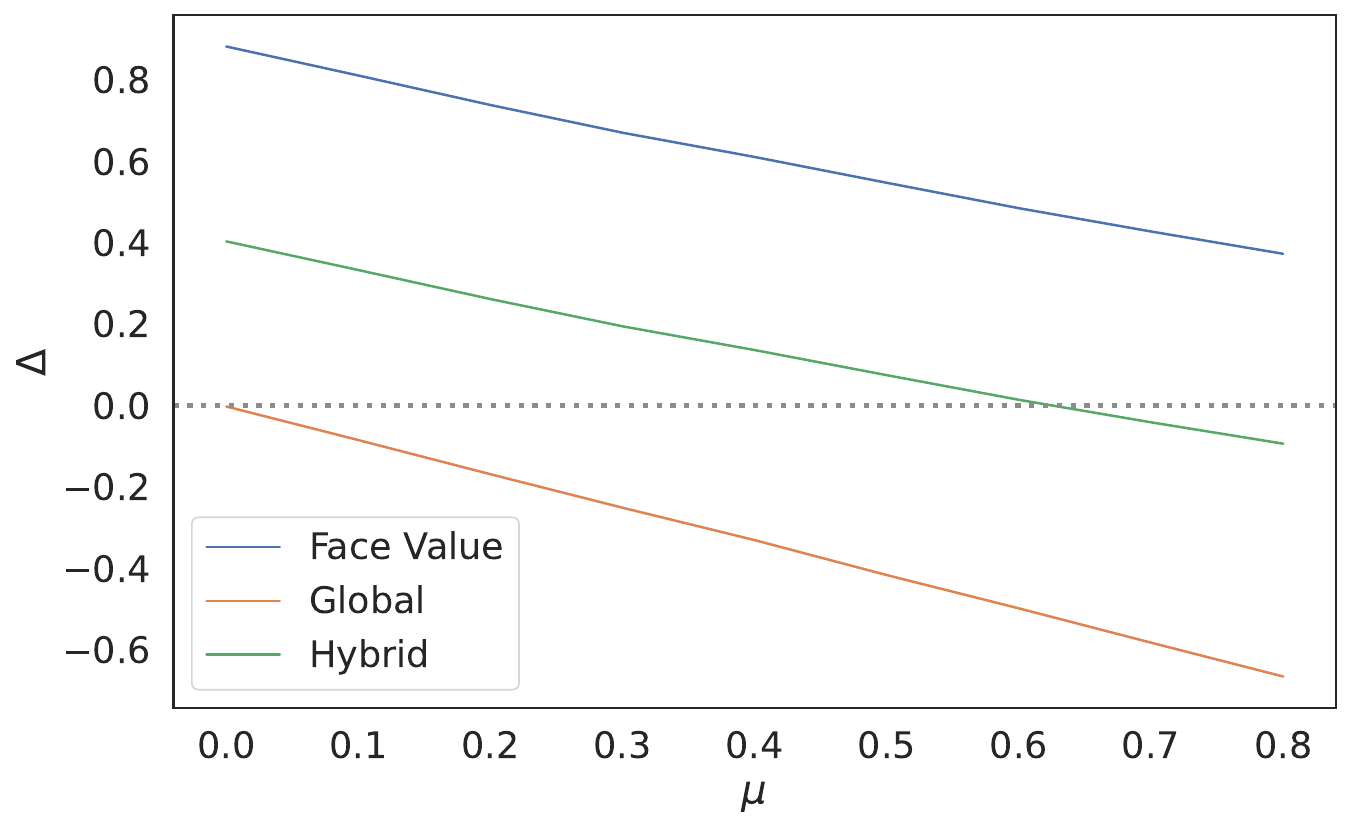}
        \label{subfig:bias_mu}
    }\hfill
    \subfloat{
        \includegraphics[width=0.3\textwidth]{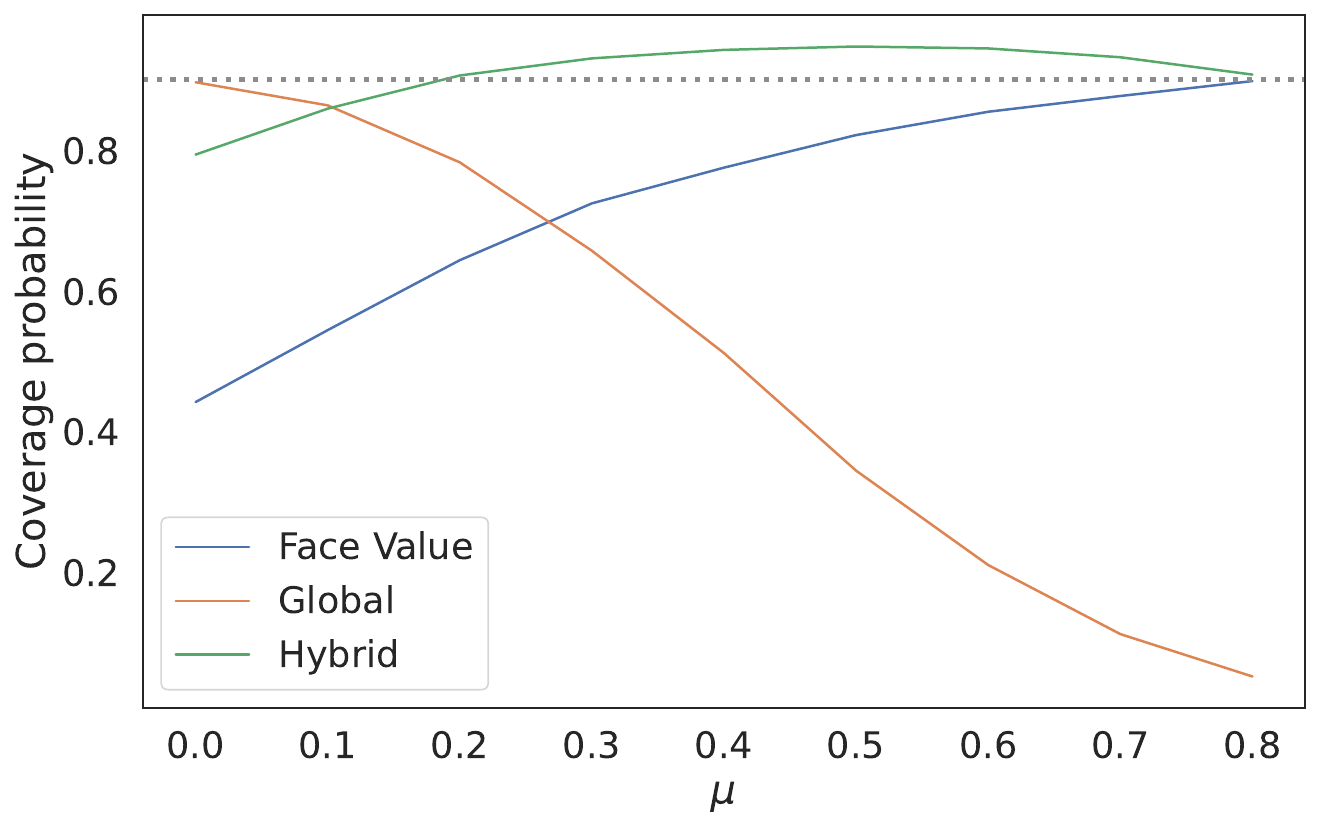}
        \label{subfig:coverage_mu}
    }\\
    \subfloat{
        \includegraphics[width=0.3\textwidth]{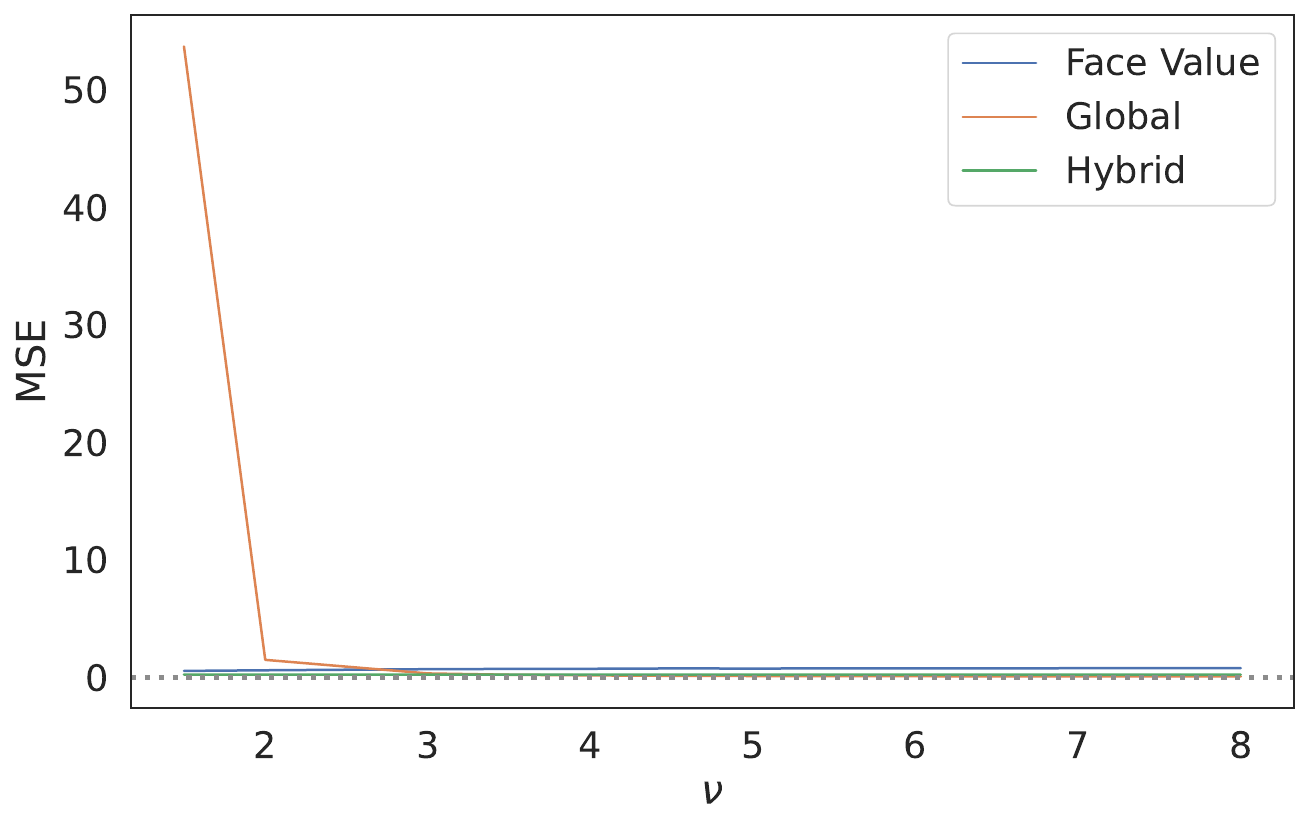}
        \label{subfig:mse_nu}
    }\hfill
    \subfloat{
        \includegraphics[width=0.3\textwidth]{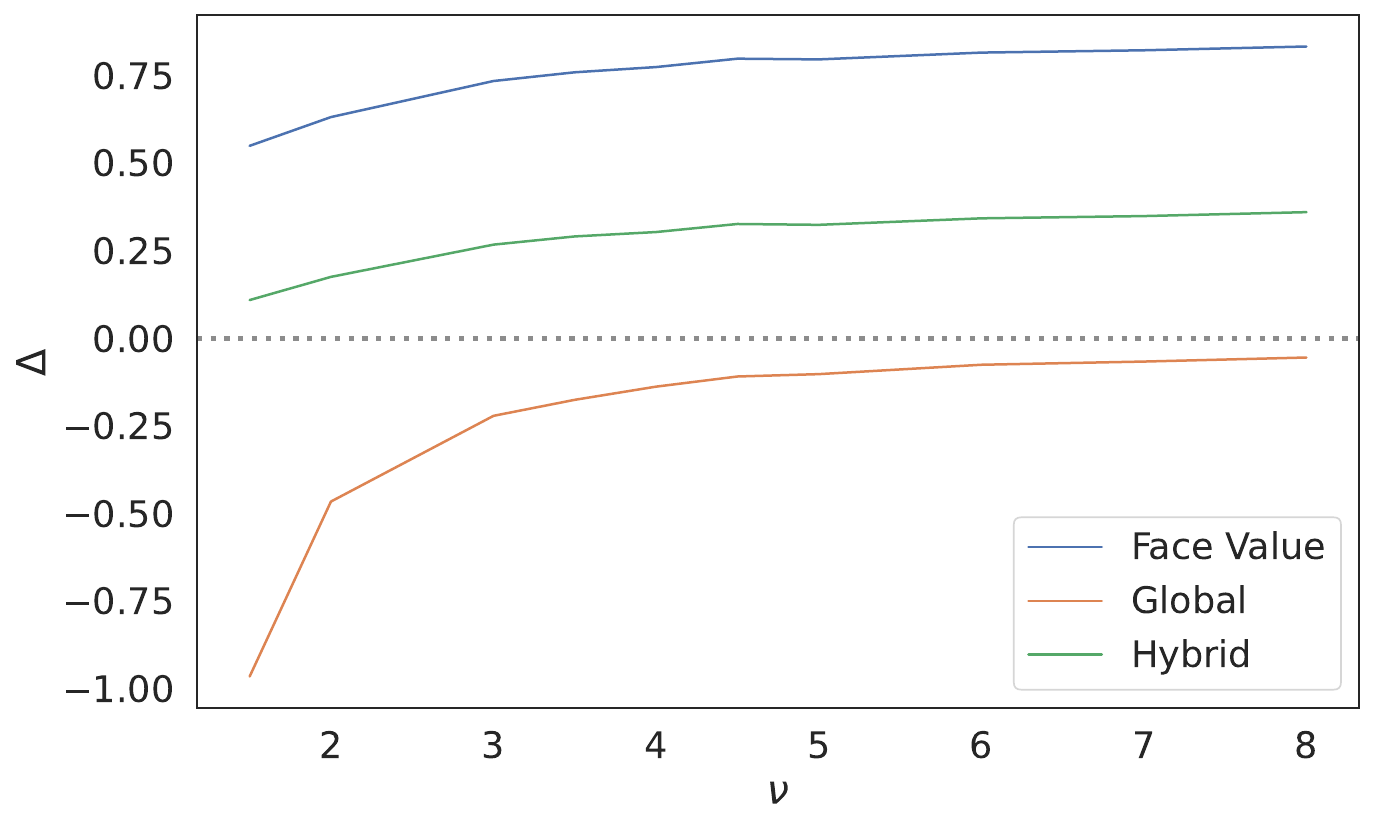}
        \label{subfig:bias_nu}
    }\hfill
    \subfloat{
        \includegraphics[width=0.3\textwidth]{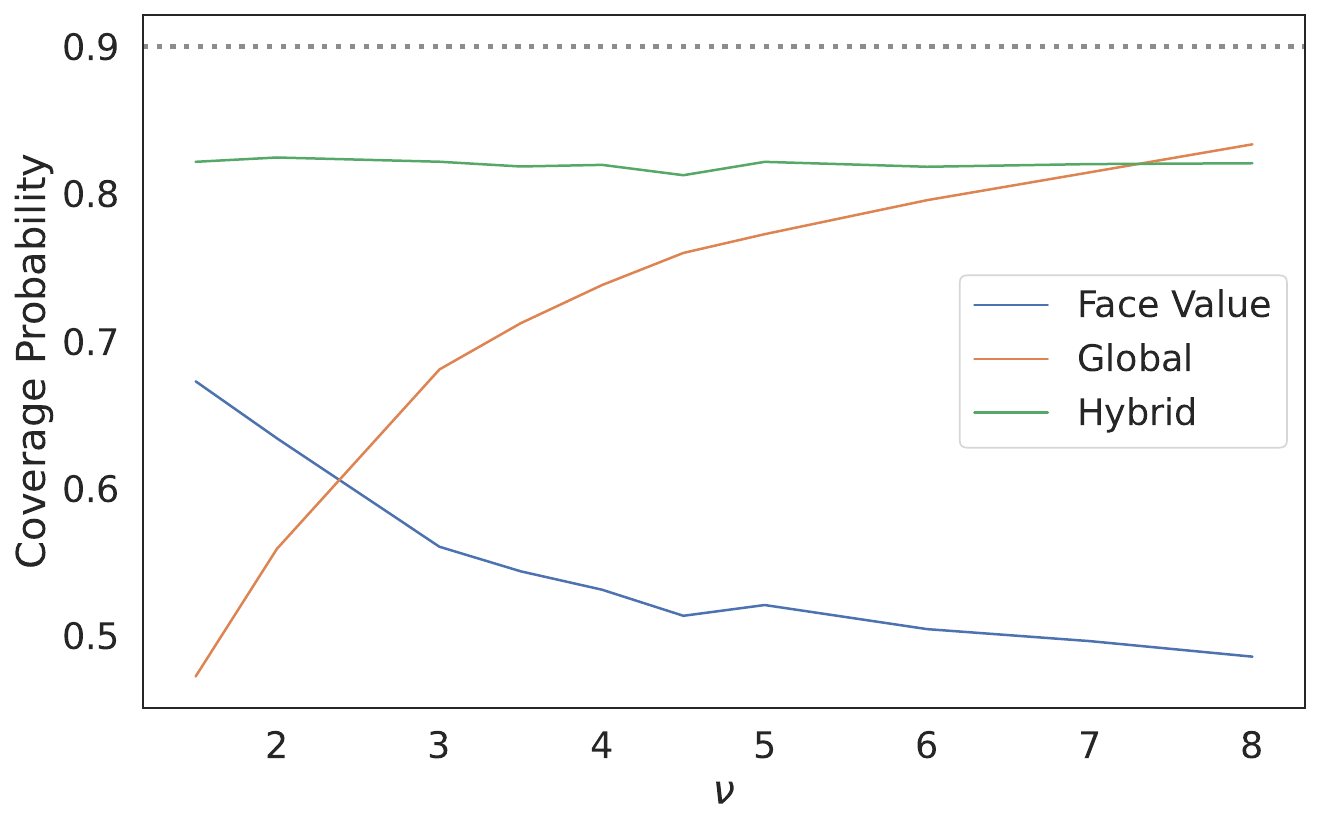}
        \label{subfig:coverage_nu}
    }\\
    \subfloat{
        \includegraphics[width=0.3\textwidth]{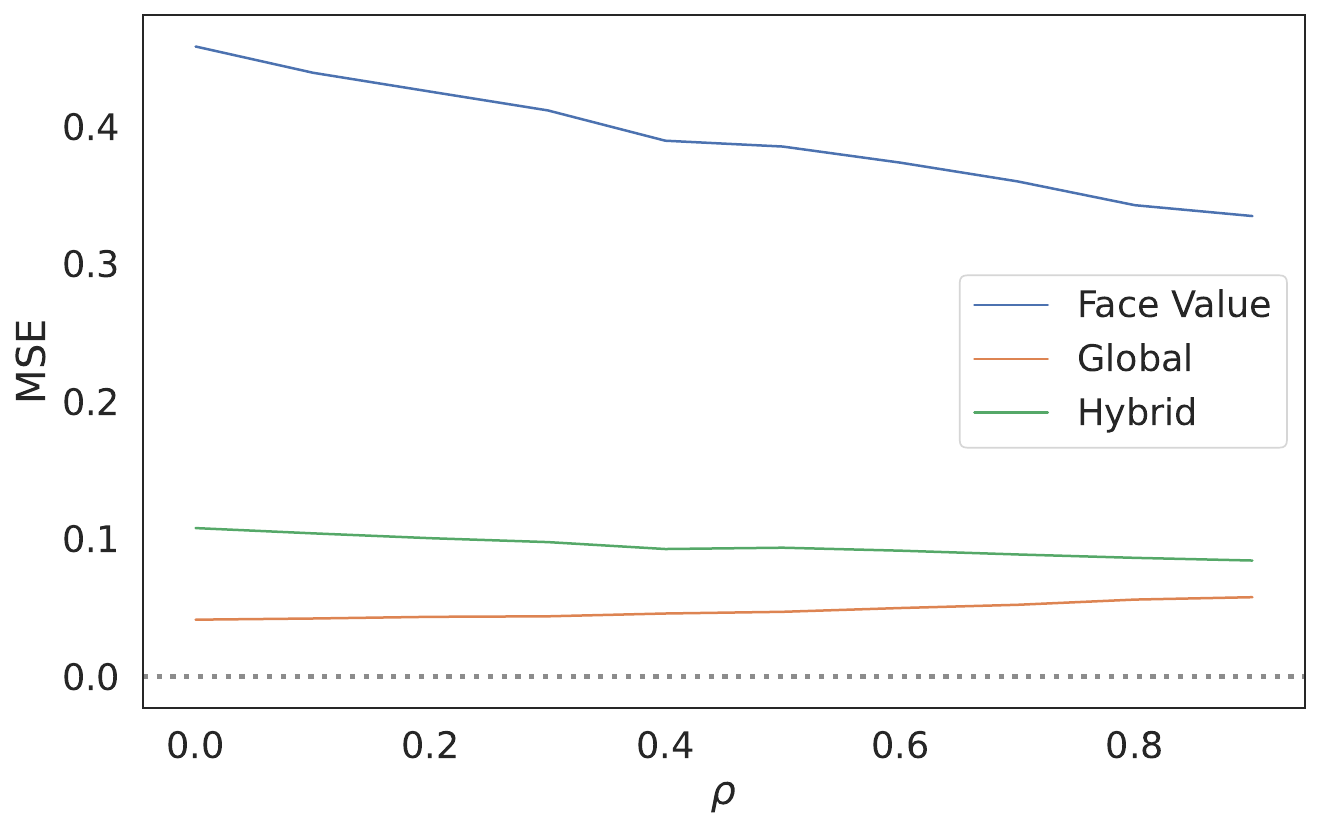}
        \label{subfig:mse_rho}
    }\hfill
    \subfloat{
        \includegraphics[width=0.3\textwidth]{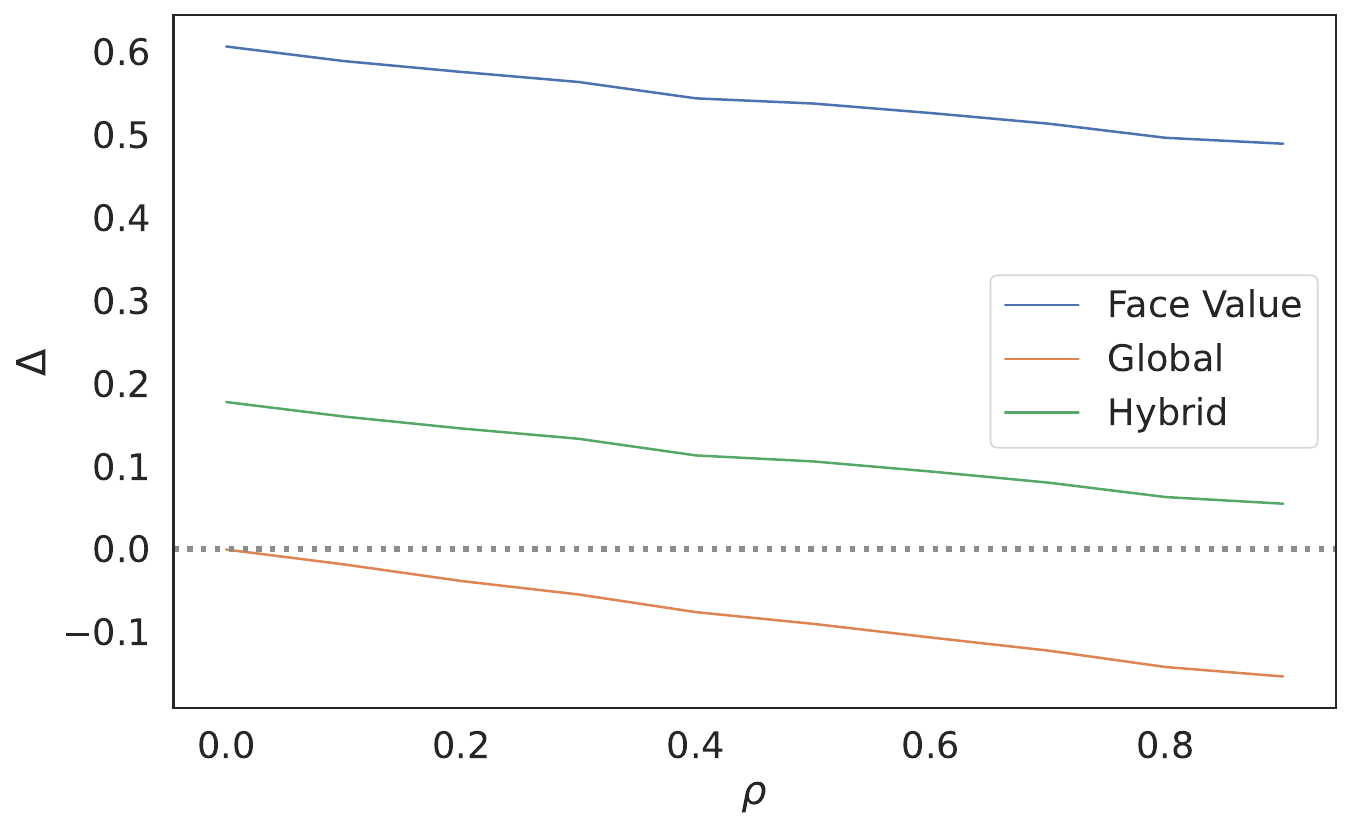}
        \label{subfig:bias_rho}
    }\hfill
    \subfloat{
        \includegraphics[width=0.3\textwidth]{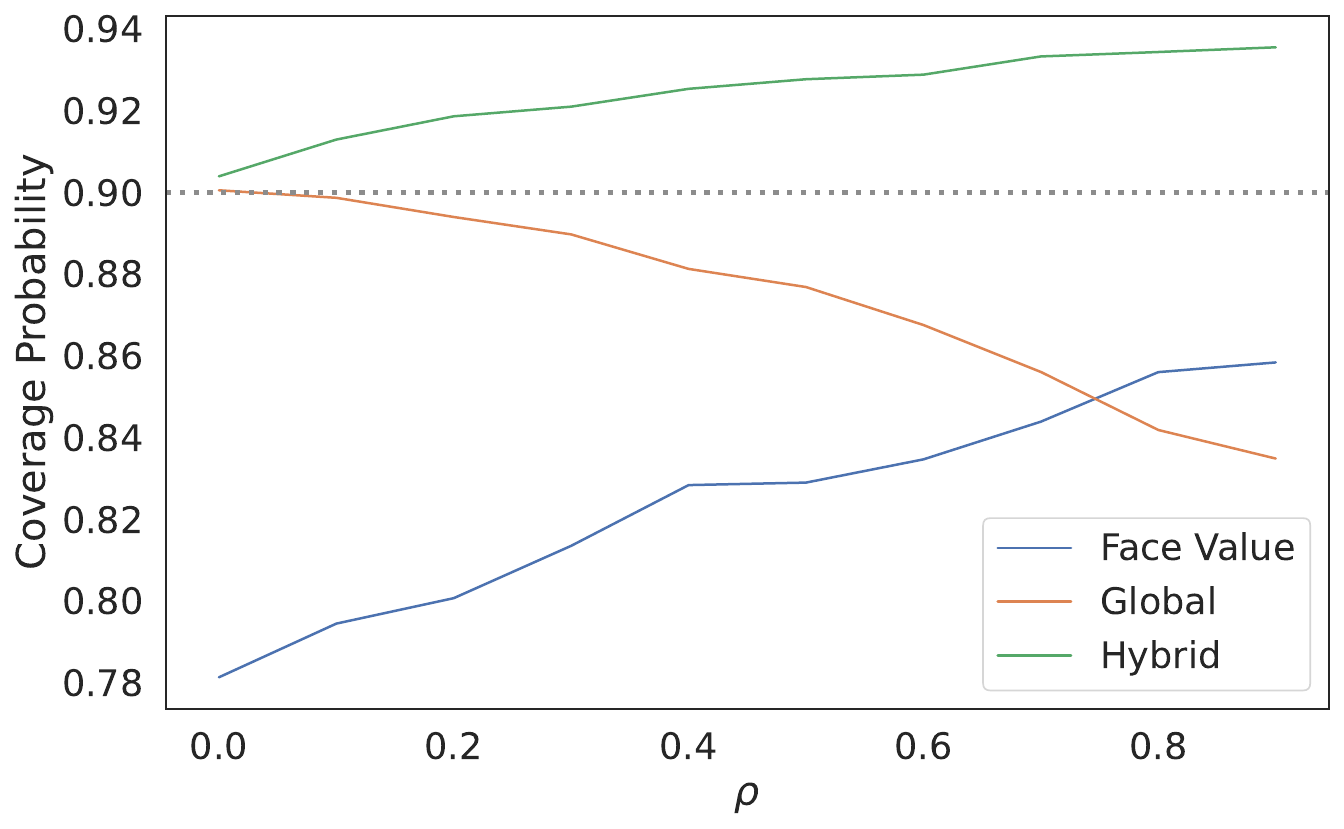}
        \label{subfig:coverage_rho}
    }
    \caption{MSE (left), bias ($\Delta$, center) and coverage probability (right) for face value, Bayesian with global shrinkage, and BHS approaches, as a function of prior mean ($\mu$, top), degrees of freedom ($\nu$, middle) and correlation ($\rho$, bottom).}
    \label{fig:misspecified}
\end{figure*}





\section{Empirical Analysis}

Motivated by the simulation performance of the BHS technique, particularly its robustness against model misspecification, we conduct an analysis in a real-world setting, using a series of online experiments from Meta Platforms' experimentation platform that serves multiple \textit{business units}.

We utilize a set of $N=167$ experiments across several business units, each comprising approximately $m_i=940$ experimental units, that are randomized between treatment and control\footnote{These experimental units can be clusters of units (e.g., users or advertisers).}. These experiments have \textbf{paired replication studies}: instances where the exact same treatment was re-tested on a distinct, subsequent sample. These replications provide a ``ground truth'' proxy that allows us to validate our estimates against realized future outcomes, applying the predictive checking strategies outlined in Section~\ref{section:validation}.

We compare the performance of the Face Value Frequentist estimator, the Bayesian global shrinkage estimator, and the BHS estimator. We use the Mean Absolute Error (MAE) relative to the replication result to evaluate performance \footnote{interval coverage is also evaluated across approaches which yields approximately well calibrated inference across the methods compared}.

\begin{table}[h!]
\centering
\caption{Summary of Performance Metrics -- MAE x 1000 -- across a collection of real-world experiments with paired replication studies.}
\vspace{-0.1cm}
\begin{tabular}{||c c||}
 \hline
Model & MAE \\ [0.5ex]
 \hline\hline
 Face Value & 1.280\\
 \hline
 Global Shrinkage & 1.121\\
 \hline
 Hybrid Shrinkage & 1.012\\ [1ex]
 \hline
\end{tabular}
\label{tbl: summary}
\end{table}

\begin{figure}
    \centering
       \includegraphics[width=0.9\linewidth]{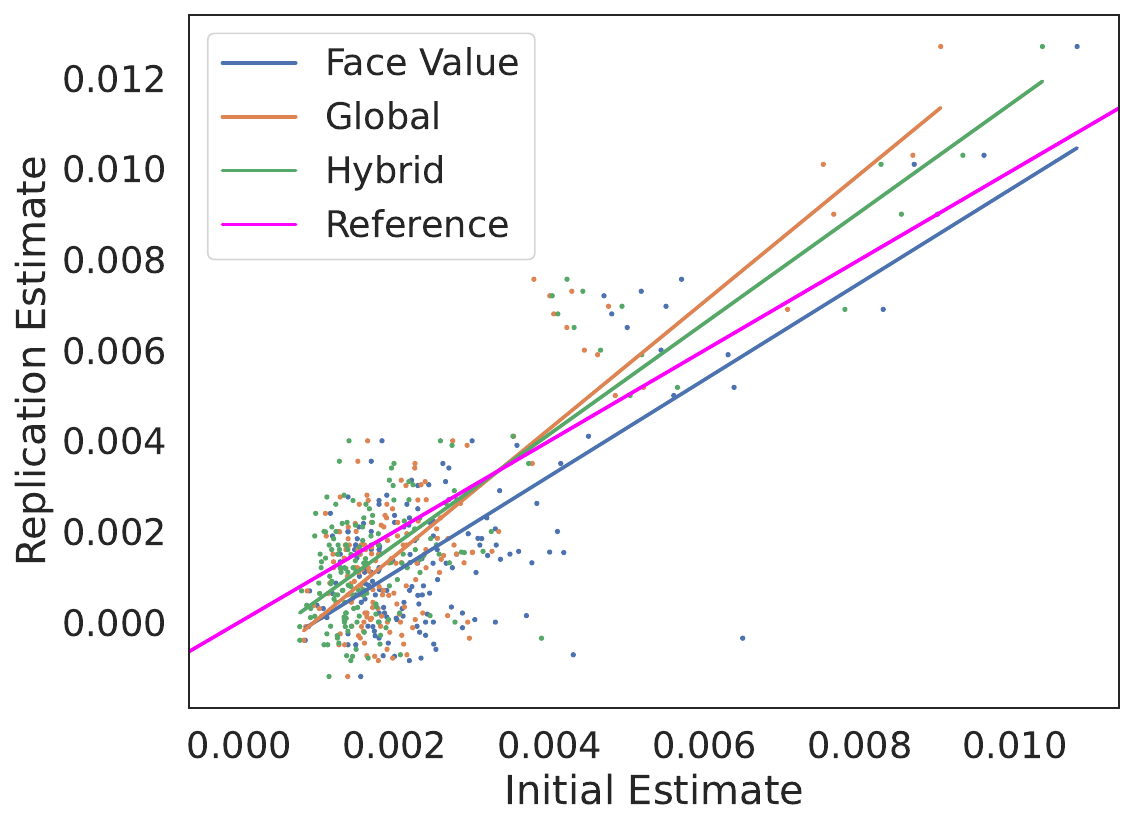}
    \caption{Comparison of estimators across a collection of real-world experiments with paired replication studies for two different business units.}
    \label{fig:comparison_proposed}
\end{figure}

\begin{figure}
    \centering
     \includegraphics[width=0.9\linewidth]{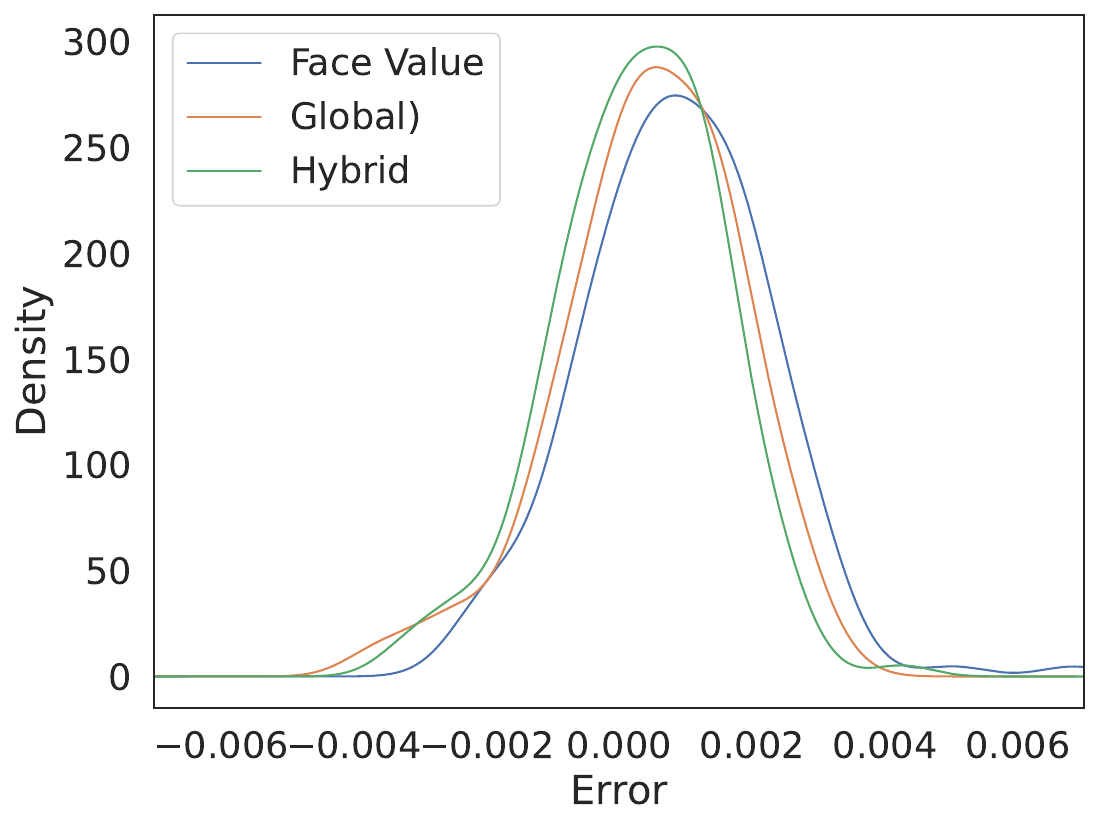}
    \caption{Distribution of errors for different estimators across a collection of real-world experiments with paired replication studies.}
    \label{fig:comparisn_errors}
\end{figure}

These results reinforce the findings of the simulation. As shown in Table~\ref{tbl: summary}, the BHS approach improves accuracy compared to both the Face Value estimator and the Bayesian approach with Global Shrinkage only. As shown in Figure~\ref{fig:comparison_proposed}, the Global Shrinkage estimator shrinks ``winners'' too strongly toward the global mean since it fails to account for experiment-level heterogeneity (i.e. heavy tails). In contrast, the BHS estimator accommodates these extremities. Hence, wherever important decision making is involved, a model that balances between learning across experiments while accommodating the individual behavior of specific experiments is recommended. An overview of the implementation for the experimentation platform at Meta Platforms is given in Appendix~\ref{ap:a1}.

\section{Conclusion}

A/B tests play an important role in data-driven decision making in many fields, including in the information technology domain as discussed here. However, as we have highlighted in this paper, experimental data is often used for both selection and inference, leading to biased estimates and invalid uncertainty quantification. This issue is increasingly relevant in application areas where systems are maturing and the relative effect of innovations is typically small.

To address this challenge, we proposed a Bayesian Hybrid Shrinkage (BHS) estimator and related uncertainty quantification technique that enables bias-corrected estimation with well-calibrated intervals under selection. Our approach remains robust against common types of misspecification—including heavy-tailed distributions and hidden selection—and outperforms key benchmarks in both simulation and empirical studies. By introducing local shrinkage factors, BHS bridges the gap between the computational efficiency of global methods and the flexibility of hierarchical models.

By providing a more accurate and reliable means of estimating treatment effects, our technique has the potential to improve decision making in a wide range of applications. Looking ahead, our proposed technique lays the foundation for future research on more complex experimental designs, such as sequential experiments, and more flexible means of parameterizing the analysis for identification and inference. We plan to extend our analysis in these directions, further advancing the state-of-the-art in A/B testing and data-driven decision making.

\newpage

\bibliographystyle{plainnat}
\bibliography{kdd_bib}

@article{stephens2017false,
  title={False discovery rates: a new deal},
  author={Stephens, Matthew},
  journal={Biostatistics},
  volume={18},
  number={2},
  pages={275--294},
  year={2017},
  publisher={Oxford University Press}
}

@article{rasines2022empirical,
  title={Empirical Bayes and Selective Inference},
  author={Rasines, Daniel Garc{\'\i}a and Young, G Alastair},
  journal={Journal of the Indian Institute of Science},
  volume={102},
  number={4},
  pages={1205--1217},
  year={2022},
  publisher={Springer}
}

@inproceedings{barajas2021online,
  title={Online advertising incrementality testing: practical lessons and emerging challenges},
  author={Barajas, Joel and Bhamidipati, Narayan and Shanahan, James G},
  booktitle={Proceedings of the 30th ACM International Conference on Information \& Knowledge Management},
  pages={4838--4841},
  year={2021}
}

@inproceedings{lee2018winner,
  title={Winner's curse: Bias estimation for total effects of features in online controlled experiments},
  author={Lee, Minyong R and Shen, Milan},
  booktitle={Proceedings of the 24th ACM SIGKDD international conference on knowledge discovery \& data mining},
  pages={491--499},
  year={2018}
}

@article{kessler2024overcoming,
  title={Overcoming the winner’s curse: Leveraging Bayesian inference to improve estimates of the impact of features launched via A/B tests},
  author={Kessler, Ryan},
  year={2024}
}

@inproceedings{ejdemyr2024estimating,
  title={Estimating the Returns from an Experimentation Program},
  author={Ejdemyr, Simon and Tingley, Martin and Shang, Yian and Brooks, Travis},
  booktitle={ACIC Conference},
  year={2024}
}

@book{kohavi2020trustworthy,
  title={Trustworthy online controlled experiments: A practical guide to a/b testing},
  author={Kohavi, Ron and Tang, Diane and Xu, Ya},
  year={2020},
  publisher={Cambridge University Press}
}

@article{gelman2014beyond,
  title={Beyond power calculations: Assessing type S (sign) and type M (magnitude) errors},
  author={Gelman, Andrew and Carlin, John},
  journal={Perspectives on psychological science},
  volume={9},
  number={6},
  pages={641--651},
  year={2014},
  publisher={Sage Publications Sage CA: Los Angeles, CA}
}

@article{gupta2019top,
  title={Top challenges from the first practical online controlled experiments summit},
  author={Gupta, Somit and Kohavi, Ronny and Tang, Diane and Xu, Ya and Andersen, Reid and Bakshy, Eytan and Cardin, Niall and Chandran, Sumita and Chen, Nanyu and Coey, Dominic and others},
  journal={ACM SIGKDD Explorations Newsletter},
  volume={21},
  number={1},
  pages={20--35},
  year={2019},
  publisher={ACM New York, NY, USA}
}

@article{andrews2024inference,
  title={Inference on winners},
  author={Andrews, Isaiah and Kitagawa, Toru and McCloskey, Adam},
  journal={The Quarterly Journal of Economics},
  volume={139},
  number={1},
  pages={305--358},
  year={2024},
  publisher={Oxford University Press}
}

@article{dawid1994selection,
  title={Selection paradoxes of Bayesian inference},
  author={Dawid, AP},
  journal={Lecture Notes-Monograph Series},
  pages={211--220},
  year={1994},
  publisher={JSTOR}
}

@article{yekutieli2012adjusted,
  title={Adjusted Bayesian inference for selected parameters},
  author={Yekutieli, Daniel},
  journal={Journal of the Royal Statistical Society Series B: Statistical Methodology},
  volume={74},
  number={3},
  pages={515--541},
  year={2012},
  publisher={Oxford University Press}
}

@article{woody2022optimal,
  title={Optimal post-selection inference for sparse signals: a nonparametric empirical Bayes approach},
  author={Woody, Spencer and Padilla, Oscar Hernan Madrid and Scott, James G},
  journal={Biometrika},
  volume={109},
  number={1},
  pages={1--16},
  year={2022},
  publisher={Oxford University Press}
}

@article{ding2018causal,
  title={Causal inference},
  author={Ding, Peng and Li, Fan},
  journal={Statistical Science},
  volume={33},
  number={2},
  pages={214--237},
  year={2018},
  publisher={JSTOR}
}

@article{van2021significance,
  title={The significance filter, the winner's curse and the need to shrink},
  author={van Zwet, Erik W and Cator, Eric A},
  journal={Statistica Neerlandica},
  volume={75},
  number={4},
  pages={437--452},
  year={2021},
  publisher={Wiley Online Library}
}

@article{rubin1998more,
  title={More powerful randomization-based p-values in double-blind trials with non-compliance},
  author={Rubin, Donald B},
  journal={Statistics in medicine},
  volume={17},
  number={3},
  pages={371--385},
  year={1998},
  publisher={Wiley Online Library}
}

@article{gelman1996posterior,
  title={Posterior predictive assessment of model fitness via realized discrepancies},
  author={Gelman, Andrew and Meng, Xiao-Li and Stern, Hal},
  journal={Statistica sinica},
  pages={733--760},
  year={1996},
  publisher={JSTOR}
}

@article{geyer1992practical,
  title={Practical markov chain monte carlo},
  author={Geyer, Charles J},
  journal={Statistical science},
  pages={473--483},
  year={1992},
  publisher={JSTOR}
}

@article{benjamini2020selective,
  title={Selective inference: The silent killer of replicability},
  author={Benjamini, Yoav},
  year={2020},
  publisher={PubPub}
}

@article{howes2024understanding,
  title={Understanding variability: the role of meta-analysis of variance},
  author={Howes, Oliver D and Chapman, George E},
  journal={Psychological Medicine},
  volume={54},
  number={12},
  pages={3233--3236},
  year={2024},
  publisher={Cambridge University Press}
}

@book{efron2012large,
  title={Large-scale inference: empirical Bayes methods for estimation, testing, and prediction},
  author={Efron, Bradley},
  volume={1},
  year={2012},
  publisher={Cambridge University Press}
}

@article{kleijn2012bernstein,
  title={The Bernstein-von-Mises theorem under misspecification},
  author={Kleijn, Bas JK and Van der Vaart, Aad W},
  year={2012}
}

@article{xu2011bayesian,
  title={Bayesian methods to overcome the winner's curse in genetic studies},
  author={Xu, Lizhen and Craiu, Radu V and Sun, Lei},
  journal={The Annals of Applied Statistics},
  pages={201--231},
  year={2011},
  publisher={JSTOR}
}

@incollection{rasines2022bayesian,
  title={Bayesian selective inference},
  author={Rasines, Daniel Garc{\'\i}a and Young, G Alastair},
  booktitle={Handbook of Statistics},
  volume={47},
  pages={43--65},
  year={2022},
  publisher={Elsevier}
}

@article{mandel2007statistical,
  title={On statistical inference under selection bias},
  author={Mandel, Micha and Rinott, Yosef},
  year={2007},
  publisher={Citeseer}
}

@article{zollner2007overcoming,
  title={Overcoming the winner’s curse: estimating penetrance parameters from case-control data},
  author={Z{\"o}llner, Sebastian and Pritchard, Jonathan K},
  journal={The American Journal of Human Genetics},
  volume={80},
  number={4},
  pages={605--615},
  year={2007},
  publisher={Elsevier}
}

@article{lind1991winner,
  title={The winner's curse: experiments with buyers and with sellers},
  author={Lind, Barry and Plott, Charles R},
  journal={The American economic review},
  volume={81},
  number={1},
  pages={335--346},
  year={1991},
  publisher={JSTOR}
}

@book{thaler1992winner,
  title={The winner's curse: Paradoxes and Anomalies of Economic Life},
  author={Thaler, Richard H},
  publisher={Free Press},
  year={1992},
  }

@article{ferguson2013empirical,
  title={Empirical Bayes correction for the winner's curse in genetic association studies},
  author={Ferguson, John P and Cho, Judy H and Yang, Can and Zhao, Hongyu},
  journal={Genetic epidemiology},
  volume={37},
  number={1},
  pages={60--68},
  year={2013},
  publisher={Wiley Online Library}
}

@article{leiner2023data,
  title={Data fission: splitting a single data point},
  author={Leiner, James and Duan, Boyan and Wasserman, Larry and Ramdas, Aaditya},
  journal={Journal of the American Statistical Association},
  pages={1--12},
  year={2023},
  publisher={Taylor \& Francis}
}

@inproceedings{Fiez2024Anduril,
 author = {Tanner Fiez and Houssam Nassif and Yu-Cheng Chen and Sergio Gamez and Lalit Jain},
 title = {Best of Three Worlds: Adaptive Experimentation for Digital Marketing in Practice},
 booktitle = {The Web Conference (WWW)},
 year = {2024},
 pages = {3586--3597},
}

@inproceedings{xu2013MABbias,
 author = {Xu, Min and Qin, Tao and Liu, Tie-Yan},
 booktitle = {Advances in Neural Information Processing Systems},
 title = {Estimation Bias in Multi-Armed Bandit Algorithms for Search Advertising},
 pages = {2400 -- 2408},
 year = {2013}
}

@article{nabi2022EB,
 author={Sareh Nabi and Houssam Nassif and Joseph Hong and Hamed Mamani and Guido Imbens},
 title={Bayesian Meta-Prior Learning Using {Empirical Bayes}},
 journal={Management Science},
 year = {2022},
 volume = {68},
 number = {3},
 pages = {1737--1755},
}

@inproceedings{Radwan2024Eval,
 author = {Mohamed A Radwan and Quinn Lanners and Jiasheng Zhang and Serkan Karakulak and Houssam Nassif and Murat Ali Bayir},
 title = {Counterfactual Evaluation of Ads Ranking Models through Domain Adaptation},
 booktitle = {Workshops of Conference on Recommender Systems (RecSys)},
 year = {2024},
}

@article{Li2027optimalPolicies,
 author={Li, Zhaoqi and Nassif, Houssam and Luedtke, Alex},
 title={Estimation of subsidiary performance metrics under optimal policies},
 journal={Statistica Sinica},
 year = {2027},
 volume={37},
 number={3},
}

\appendix
\section{Recommendation for System Design}
\label{ap:a1}
The Bayesian Hybrid Shrinkage (BHS) layer can be implemented as a non-invasive augmentation that can be applied on top of existing A/B testing infrastructure. Historical experimental data are aggregated offline to estimate the global hyperparameter $\tau$ along with $a, b$ and $m_0$ via Empirical Bayes methods. The hyperparameters are subsequently propagated to a distributed, low-latency configuration repository to be accessed as needed.

Upon request invocation from the experimentation user interface, a computationally efficient inference module accesses the pre-computed global hyperparameters and the experiment-level sufficient statistics, using them to derive posterior summaries under BHS. The sufficient statistics foundational definitions and aggregation procedures remain invariant. The user interface simply surfaces the BHS posterior means in lieu of the Face Value estimator.

This architectural decoupling of metric computation from statistical inference ensures backwards compatibility with legacy systems, while simultaneously facilitating the application of shrinkage estimators across the experimental portfolio. An example implementation schematic for the experimentation platform at [Redacted] is given in Figure \ref{fig:sch}.

\begin{figure}[hb]
    \centering
    \includegraphics[width=0.90\linewidth]{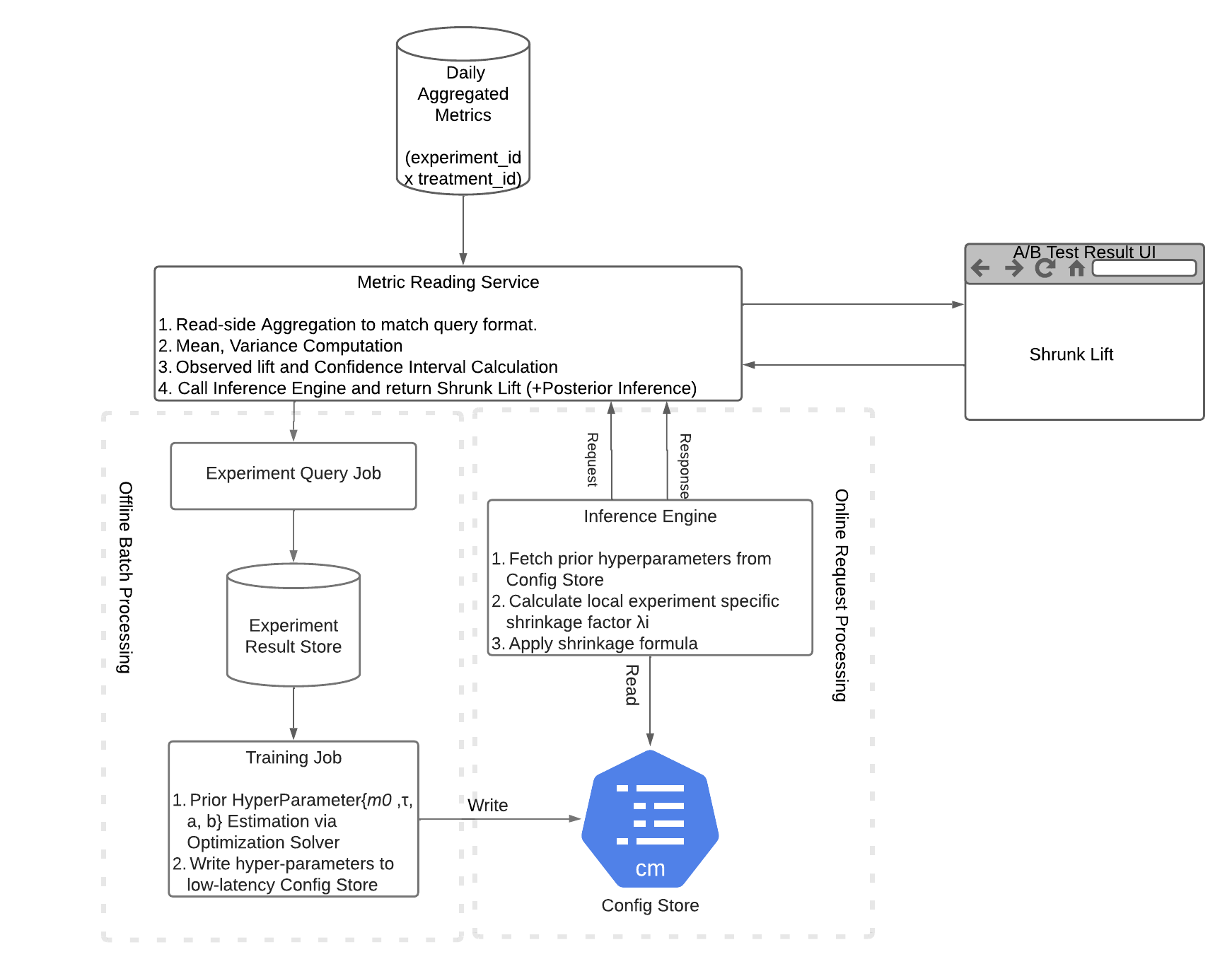}
    \caption{Schematic of implementation for a post-hoc BHS adjustment.}\label{fig:sch}
\end{figure}

\end{document}